\documentclass[prx,aps,twocolumn,notitlepage,superscriptaddress,showpacs,nofootinbib]{revtex4-2}
\usepackage{tikz} \usetikzlibrary{arrows.meta,positioning,shapes.geometric}

\usepackage{enumerate,appendix}
\usepackage{titlesec}
\usepackage{amsmath, amsthm, amssymb,commath}
\usepackage{color,calc,graphicx}
\usepackage[usenames,dvipsnames,svgnames,table,cmyk,hyperref]{xcolor}
\usepackage[colorlinks]{hyperref}
\usepackage{optidef}
\usepackage{makecell}
\hypersetup{
	colorlinks = true,
	urlcolor = {blue},
	citecolor = {blue},
	linkcolor= {blue}
}
\usepackage{graphicx}
\usepackage{amsmath}
\usepackage{latexsym}
\usepackage{bbm}

\usepackage[charter,cal=cmcal,sfscaled=false]{mathdesign}
\usepackage{booktabs}
\usepackage{multirow}
\usepackage{subcaption}
\usepackage{dcolumn}
\usepackage{mathrsfs}
\usepackage{float}

\def \be {\begin{equation}}
\def \ee {\end{equation}}

\newcommand{\Tr}{\mathrm{Tr}}

\newcommand{\ket}[1]{|#1\rangle}
\newcommand{\bra}[1]{\langle#1|}

\def \sofc2{{\cal S}({\mathbb C}^2)}

\def\>{\rangle}
\def\<{\langle}

\newtheorem{theorem}{Theorem}
\newtheorem{lemma}[theorem]{Lemma}

\newenvironment{proofof}[1]{\vspace*{5mm} \par \noindent{\it Proof of #1:\hspace{2mm}}}{\endproof
\hfill$\Box$ \vspace*{3mm}}

\begin{document}

\title{Characterization-free classification and identification of the environment between two quantum players}

\author{Masahito Hayashi}
\thanks{These authors contributed equally to this work.}
\email{hmasahito@cuhk.edu.cn}
\affiliation{School of Data Science, The Chinese University of Hong Kong, Shenzhen, Longgang District, Shenzhen, 518172, China}
\affiliation{International Quantum Academy, Futian District, Shenzhen 518048, China}
\affiliation{Graduate School of Mathematics, Nagoya University, Nagoya, 464-8602, Japan}

\author{Longyang Cao}\thanks{These authors contributed equally to this work.}
\affiliation{Quantum Science Center of Guangdong-Hong Kong-Macao Greater Bay Area, Shenzhen 518045, China}
\affiliation{Shenzhen University, Shenzhen 518060, China}

\author{Baichu Yu}\email{yubc@sustech.edu.cn}
\affiliation{Shenzhen Institute for Quantum Science and Engineering,
Southern University of Science and Technology, Nanshan District, Shenzhen, 518055, China}
\affiliation{Quantum Science Center of Guangdong-Hong Kong-Macao Greater Bay Area, Shenzhen 518045, China}

\author{Yuan-Yuan Zhao}\email{zhaoyuanyuan1@quantumsc.cn}
\affiliation{Quantum Science Center of Guangdong-Hong Kong-Macao Greater Bay Area, Shenzhen 518045, China}

\begin{abstract}
    Classifying the causal structure of quantum channels is essential for verifying quantum networks and certifying quantum resources. We introduce a characterization-free protocol enabling two isolated players, Alice and Bob, to classify and  identify the definite-order strategy adopted by an unknown environment mediating their channels. 
    Without assuming knowledge of their devices or the environment, the players infer the causal order solely from input-output statistics by testing Markovian conditions that we prove are necessary and sufficient for each strategy class. Remarkably, we prove that even with a minimal random channel consisting of two-outcome POVMs and two-state preparations, the protocol retains full performance with probability one. We experimentally demonstrate the protocol on an optical platform, reliably distinguishing between several strategies. Our results provide a strong and robust tool for causal inference in quantum networks.

\end{abstract}

\maketitle

\section{Introduction}
The study of quantum channels is central to quantum information science, providing fundamental insights into how information is processed, transmitted, and manipulated in quantum systems. When a quantum protocol involves the application of multiple channels, there exists different strategies to govern their application order. 
Generally, the strategies for applying multiple channels can be classified into definite-order and indefinite-order ones \cite{OCB}. 
In definite-order strategies, the temporal sequence of channel applications is well-defined. In contrast, indefinite-order strategies do not correspond to a well-defined temporal sequence and can exhibit superposition of orders. \cite{OCB,goswami2018indefinite}.
While recent work has demonstrated the theoretical interest and operational advantages of indefinite-order strategies
\cite{ebler2018enhanced,zhao2020quantum,kristjansson2020resource,chapeau2021noisy,yin2023experimental,felce2020quantum,xia2026scaling}, definite-order strategies remain the foundation of most quantum circuits, protocols, and networks.

The ability to control and verify the order of quantum channels is of foundational and operational importance. For instance, in a Bell test entangled across multiple degrees of freedom \cite{miklin2022exponentially}, 
non-demolishing measurements must precede demolishing ones to preserve correlations. More generally, the specific order of operations can directly influence protocol performance: sequential channel arrangements can outperform parallel ones in channel discrimination tasks \cite{bavaresco2022unitary}, 
and optimized causal sequences can enhance cloning fidelity and accelerate quantum-algorithm learning \cite{chiribella2008quantum}. 

Although channel application strategies are often well controlled and assumed known, the need to identify or detect them is growing rapidly. 
For instance, verifying that quantum gates are executed in their intended order—despite noise and errors—is essential for circuit reliability.
Furthermore, in communication protocols with an adversary, the type of strategy constrains the possible attacks.
For instance, if the adversary has a quantum memory and entangles an ancilla with the systems of the communicating parties, then coherent (entangling) attacks become possible~\cite{RennerThesis,FuchsGisinGriffithsNiuPeres1997}.
In fact, in secure key distillation \cite{DevetakWinter2005,renner2008security} or secure communication \cite{scarani2009security}
protocols based on correlated states between communicating parties, the achievable secret key rate and secure communication rate depend on whether these correlations are entangled with the adversary's system. 
In all these cases, the strategies need to be treated as unknown environments, and knowledge about them can only be inferred from experiment.

A previous work \cite{BMQ} provided a mathematical framework for characterizing these strategies in the simplest two-player scenario. 
However, it did not address how the two players, with their observations, could perform an identification for the strategies. 
Only the paper \cite{Liu2} explored 
the time-ordering problem---specifically, 
the detection of the application order for sequential channels without memory---by constructing a pseudo-density matrix using the experimental statistics and the knowledge of the measurement operators \cite{FJV}. 
However, in realistic scenarios with noise, drift, or adversarial behavior, relying on a detailed and trusted device characterization (e.g., calibrated measurement/preparation operators) may be impractical and can bias the inferred causal structure.
Therefore, a characterization-free identification protocol, applicable to a wider range of strategies and relying solely on observed statistics, remains in need even for the simplest two-player scenario.

In this work, we propose a characterization-free protocol that enables two isolated quantum players to classify and identify all definite-order strategies implemented by their environment. Without characterization of their experimental devices, the players determine the strategy using only the input–output statistics.
We establish a one-to-one correspondence between strategy classes—described via process matrices \cite{OCB,Liu}—and Markovian conditions derived from the statistics produced by these strategy classes. 
We prove that these conditions become necessary and sufficient for strategy identification under a weak tomographic-completeness assumption. 
Our approach outperforms quantum process tomography \cite{chuang1997prescription,rubino2017experimental} in efficiency and robustness. 
Remarkably, we further show that even under a much weaker assumption—rendering the protocol characterization-free in the sense that it requires no characterization of the measurement and preparation operators—and with drastically fewer experimental settings, the identification performance is retained with probability one.


\section{Our protocol}
\label{S2}

We consider two isolated quantum players, Alice and Bob, each controlling a channel with input-output systems $\mathcal{H}_{I,1}/\mathcal{H}_{O,1}$ and $\mathcal{H}_{I,2}/\mathcal{H}_{O,2}$, respectively. 
The systems have dimensions $d_{(I,1)}$, $d_{(O,1)}$, $d_{(I,2)}$, $d_{(O,2)}$ respectively.
A third party, Charlie (the environment), mediates the interaction of their channel 
by arranging their input and output states.
In our scenario, Charlie performs six different definite-order strategies of two different types: parallel and sequential. 
In a parallel strategy, Charlie gives predetermined
input states to Alice and Bob 
which are independent of the output states of their channels, 
while in a sequential strategy the input state of Alice (Bob) can be dependent on Bob's (Alice's) output state. 
Further, Charlie can also apply 
a classical or quantum memory 
to generate some correlations between 
the input states of Alice and Bob.

Alice and Bob have no prior knowledge to the environment, 
nor can they directly observe the operation of Charlie or communicate to each other during the experiment. 
They are restricted to performing measurements 
respectively on the input systems they receive, 
and prepare an output system respectively to send back to the environment. 
Through this two-step measure-and-prepare process, 
Alice and Bob obtain a set of input-output statistics. 
Alice and Bob need to classify/identify Charlie's strategy from such experimental statistics. 

Now we introduce the details and the mathematical expressions of Charlie's strategies, Alice's and Bob's channels, and the experimental correlations. 
In our case, each of Charlie's strategies will be represented by
a process matrix 
$W_{S}$ \cite{OCB}, 
which is a positive semi-definite matrix on 
${\cal H}_{I,1}\otimes {\cal H}_{O,1}\otimes {\cal H}_{I,2}\otimes {\cal H}_{O,2}$.
In general, all physical process matrices satisfy no-signaling-in-time conditions \cite{BMQ}:
\begin{align}
    &(\Tr_{(I,1),(O,1),(O,2)}W_S)\otimes\rho_{\text{mix},(O,2)} = \Tr_{(I,1),(O,1)}W_S, \label{G1}\\
    &(\Tr_{(O,1),(I,2),(O,2)}W_S)\otimes\rho_{\text{mix},(O,1)} = \Tr_{(I,2),(O,2)}W_S,\label{G2}\\
    &W_S = (\Tr_{(O,1)}W_S)\otimes\rho_{\text{mix},(O,1)} + (\Tr_{(O,2)}W_S)\otimes\rho_{\text{mix},(O,2)} \nonumber\\
    &\quad - (\Tr_{(O,1),(O,2)}W_S)\otimes\rho_{\text{mix},(O,1),(O,2)},\label{G3}
\end{align}
where $\rho_{\text{mix}}$ denotes the maximally mixed state. Hereinafter, we call the class of strategies  satisfying Eqs.~\eqref{G1}--\eqref{G3} general strategy class $S_{G}$. 
The six strategy classes that Charlie may adopt in our scenario are all subclasses of $S_{G}$.
We first list the three parallel strategies. 

\begin{figure}[h]
    \centering
    \begin{subfigure}{0.3\textwidth} 
        \centering
        
        \includegraphics[width=\textwidth]{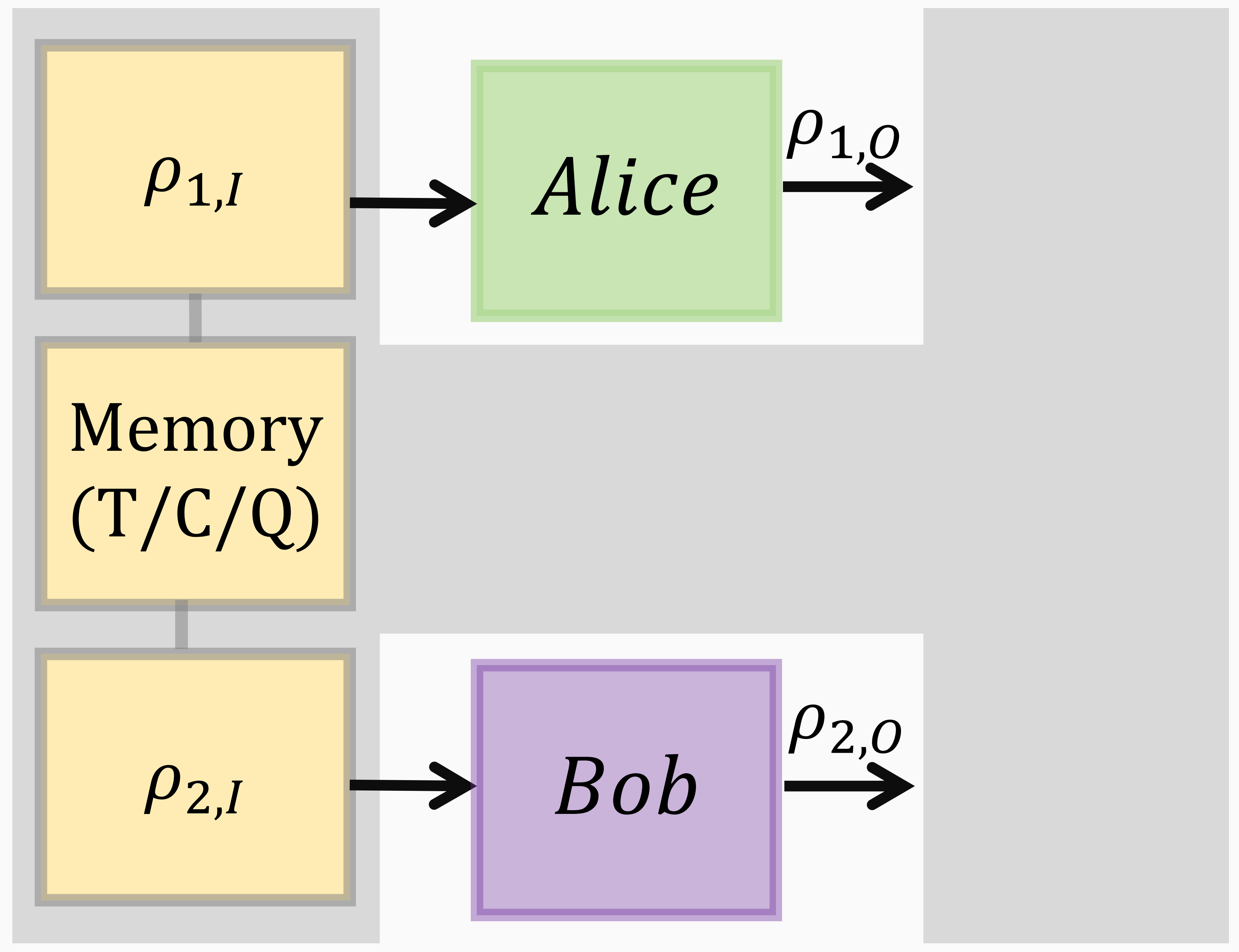} 
        \caption{Charlie's three parallel strategies.}
        \label{parallel} 
    \end{subfigure}
    \hfill
    \begin{subfigure}{0.3\textwidth} 
        \centering
        \includegraphics[width=\textwidth]{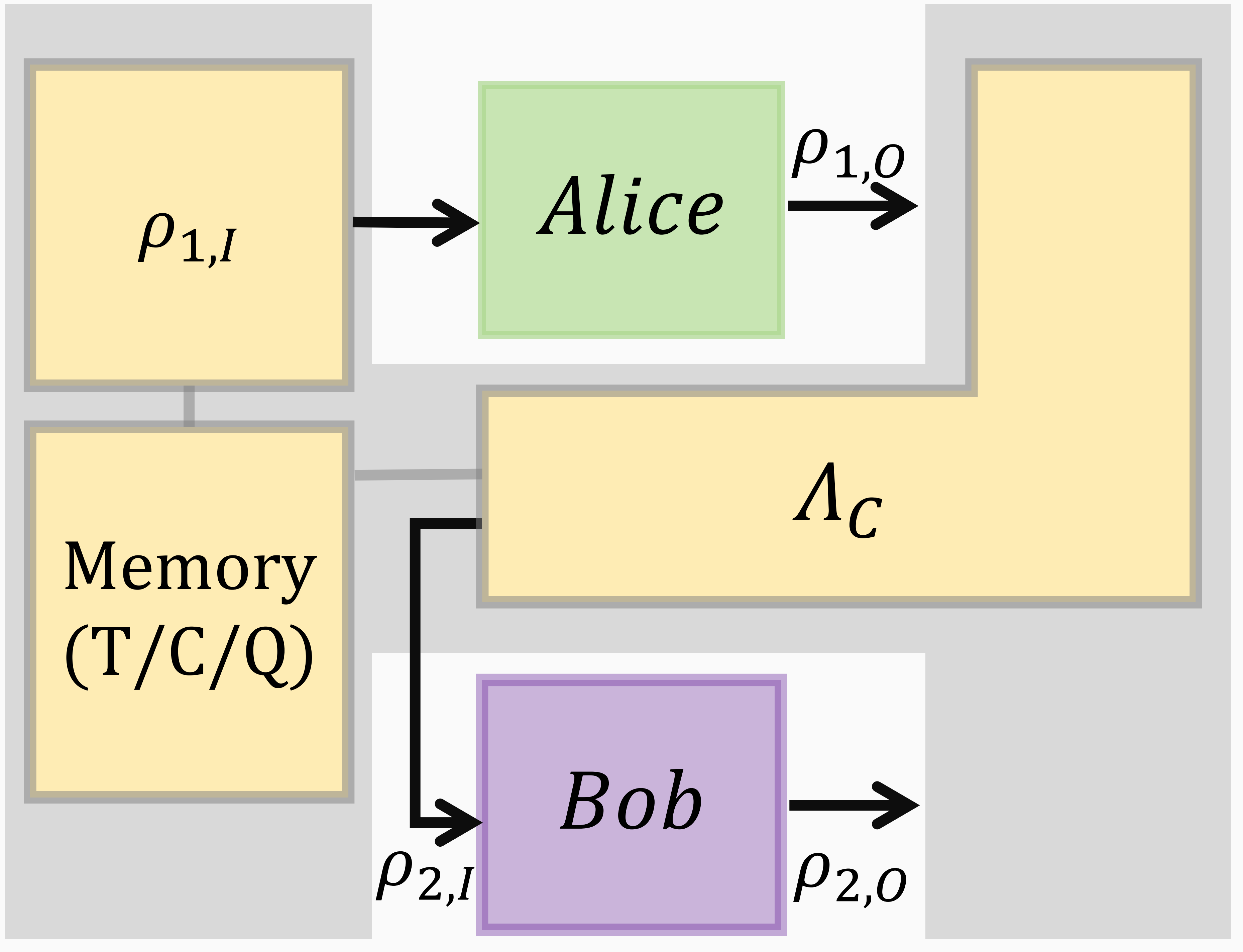} 
        \caption{Charlie's three sequential strategies.}
        \label{sequential} 
    \end{subfigure}
    \hfill
    \begin{subfigure}{0.3\textwidth} 
        \centering
        \includegraphics[width=\textwidth]{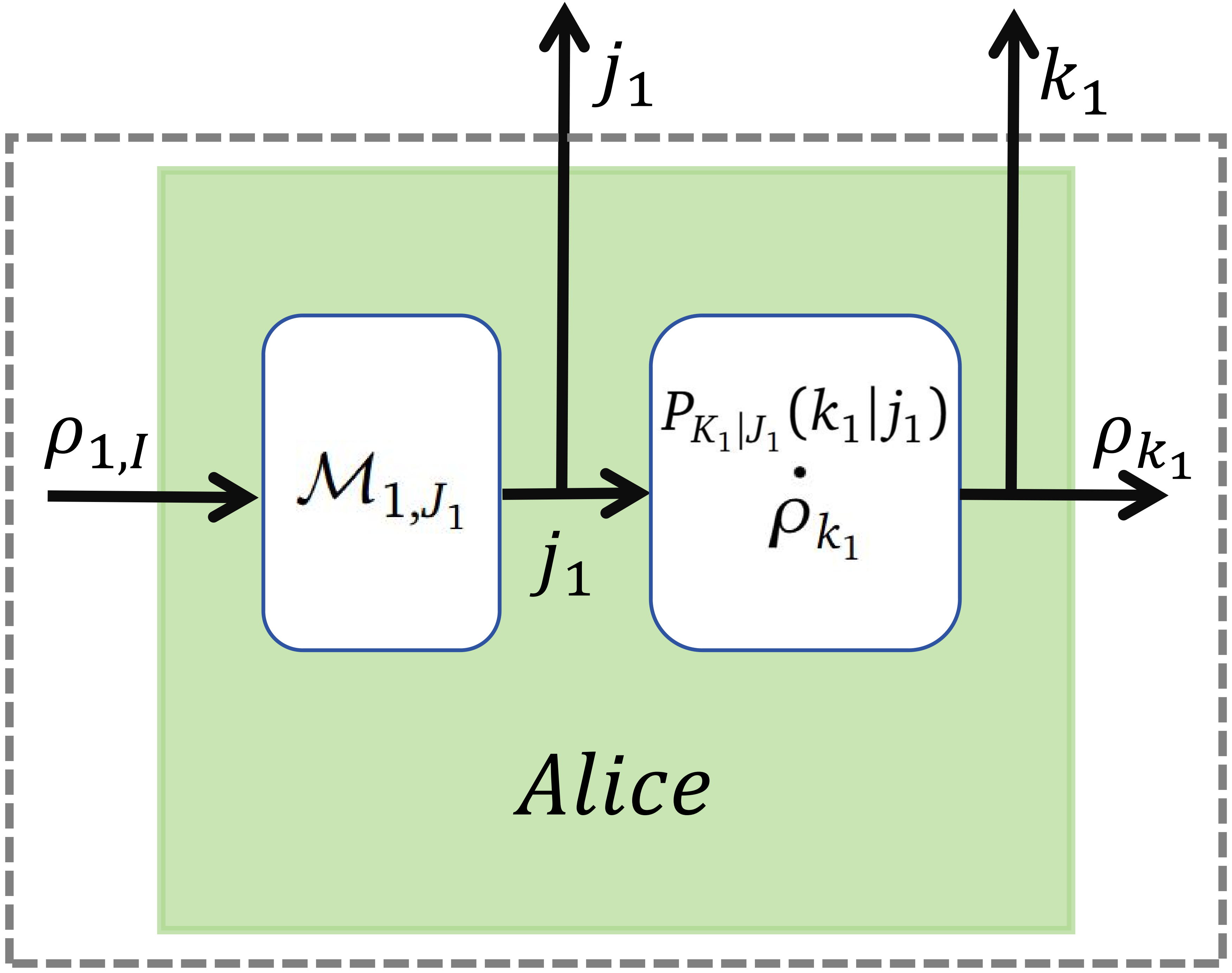} 
        \caption{Alice's MP channel. }
        \label{DI} 
    \end{subfigure}
    \caption{Fig.~(\ref{parallel}) and Fig.~(\ref{sequential}) specify the structure of Charlie's strategy classes within the general class $S_{G}$,
    which can be represented as a quantum comb (the gray part). \cite{chiribella2008quantum}. 
    Fig.~(\ref{parallel}) specifies the three parallel strategies. When the type of memory is trivial (T)/ classical (C)/ quantum (Q), the strategy is individual ($S_{I}$)/classical ($S_{C}$)/quantum ($S_{Q}$). The unidentified parts in the comb (area in the dashed boundary) do not influence the protocol. 
    Fig.~(\ref{sequential}) specifies the three sequential strategies ($S_{N,1\rightarrow 2}$)/$S_{C,1\rightarrow 2}$)/$S_{Q,1\rightarrow 2}$), which correspond to T/ C/ Q type of memory respectively. 
    Fig.~(\ref{DI}) illustrates Alice's MP channel (Bob's is analogous). In our protocol, the parts inside the gray dashed box do not require characterization. }
    \label{fig1}
\end{figure}

\subsection{Parallel strategies}
\paragraph{Individual strategy $S_I$:} Charlie sends independent states $\rho_1$, $\rho_2$ to Alice and Bob as the input states $\rho_{1,I}$ and $\rho_{2,I}$. The process matrix factorizes as:
\begin{equation}\label{SP1}
    W_{S_I} = \rho_1 \otimes \rho_2 \otimes I_{(O,1),(O,2)},
\end{equation}
where $I_{(O,1)(O,2)}$ is the identity matrix on $ {\cal H}_{O,1}\otimes {\cal H}_{O,2}$. 

\paragraph{Classical parallel strategy $S_C$:} Charlie uses classical memory $X$ to send correlated states $\rho_{1,X}$, $\rho_{2,X}$. The reduced process matrix is separable:
\begin{equation}\label{SP2}
    W_{S_C} = \sum_X p_X \rho_{1,X} \otimes \rho_{2,X} \otimes I_{(O,1),(O,2)}.
\end{equation}

\paragraph{Quantum parallel strategy $S_Q$:} Charlie prepares an entangled state $\rho_{12}$ on $\mathcal{H}_{I,1}\otimes\mathcal{H}_{I,2}$. The process matrix satisfies:
\begin{equation}\label{SP3}
W_{S_{Q}} =\rho_{12}\otimes I_{(O,1),(O,2)}.
\end{equation}

Then we list the three sequential strategies. 
We only describe strategies with direction $1\rightarrow2$ (Alice$\rightarrow$Bob); 
the opposite direction is analogous.

\subsection{Sequential strategies}
\paragraph{Non-memory sequential $S_{N,1\rightarrow2}$:} Charlie sends $\rho_1$ to Alice, applies channel $\Lambda_C$ to her output, and sends the result to Bob. The process matrix factorizes:
\begin{equation}\label{SS1}
    W_{S_{N,1\to2}} = \rho_1 \otimes C[\Lambda_{C}] \otimes I_{(O,2)},
\end{equation}
where $C[.]$ represents the Choi matrix of a channel \cite{choi1975completely}. $C[\Lambda_{C}]$ is on space $\mathcal{H}_{O,1} \otimes \mathcal{H}_{I,2}$.

\paragraph{Classical sequential $S_{C,1\rightarrow2}$:} With classical memory $X$, Charlie sends $\rho_{1,X}$ and applies $\Lambda_{C,X}$. The reduced matrix is separable:
\begin{equation}\label{SS2}
    W_{S_{C,1\to2}} = \sum_X p_X \rho_{1,X} \otimes C[\Lambda_{C,X}]\otimes I_{(O,2)},
\end{equation}

\paragraph{Quantum sequential $S_{Q,1\rightarrow2}$:} Charlie prepares entangled $\rho_{1,\text{aux}}$, sends the $\rho_{1}$ part to Alice as the input state, and applies quantum channel $\Lambda_C$ on the joint system of $\mathcal{H}_{aux}$ and Alice's output system $\mathcal{H}_{O,1}$. The process matrix satisfies:
\begin{equation}\label{SS3}     W_{S_{Q,1\to2}}=\frac{1}{d_{O,2}}(\Tr_{(O,2)}W_{S_{Q,1\to2}})\otimes I_{(O,2)} .
\end{equation}
We illustrate all parallel and sequential strategies in Fig.~\ref{fig1}. 

Charlie's strategy classes form a hierarchy:
\begin{align*}\label{BSM}
\begin{array}{ccccc}
S_{N,1\to2} & \supset &S_I &\subset & S_{N,2\to1} \\
\cap & &\cap & & \cap \\ 
S_{C,1\to2} & \supset &S_C &\subset & S_{C,2\to1}\\ 
\cap & &\cap & & \cap \\
S_{Q,1\to2} & \supset &S_Q &\subset & S_{Q,2\to1}
\end{array}.
\end{align*}

Now we specify the two-step operation of Alice and Bob.
We denote such operation as measure-and-prepare (MP) channel. 
An MP channel contains two types of random variables corresponding to the measurement outcomes and state preparation.
Hereinafter, we let 
$J_{1}$ and $J_{2}$ describe the random variable corresponding to the measurement outcomes of Alice and Bob, and 
$K_{1}$, $K_{2}$ describe the (classical label of the) states they output, respectively.
And we let $\mathcal{B}(\mathcal{H})$ 
denote the space of bounded Hermitian operators 
on a specific Hilbert space $\mathcal{H}$.
The two-step operation of MP channel is specified as follows.
\begin{description}
\item[(i) \emph{Measurement step.}] When Alice and Bob respectively receive the unknown input states $\rho_{1}$ and $\rho_{2}$, they make measurements $\mathcal{M}_{1,J_1}:=\{M_{1,j_1}\}$ and $\mathcal{M}_{2,J_2}:=\{M_{2,j_2}\}$ on them, where $M_{1,j_1}\in \mathcal{B}(\mathcal{H}_{I,1})$ and $M_{2,j_2}\in \mathcal{B}(\mathcal{H}_{I,2})$ are POVM operators giving outcome $j_1$, $j_2$.
\item[(ii) \emph{Preparation step.}] 
Alice (1st player) and Bob 
(2nd player) prepare quantum states $\rho_{k_1}$ and $\rho_{k_2}$ according to the conditional distributions $P_{K_1|J_1}(k_1|j_1)$ and $P_{K_2|J_2}(k_2|j_2)$, and send them to Charlie. 
\end{description}
Note that throughout this paper, 
Alice and Bob will always let
their conditional distributions $P_{K_1|J_1}$ and $P_{K_2|J_2}$  satisfy
\begin{equation}
  P_{K_i|J_i=m}\neq P_{K_i|J_i=n}, \ \forall n\neq m,\  (i=1,2).  
\end{equation}
This is to prevent trivial MP channel 
that could not distinguish sequential strategies.
Generally, any quantum channel with measurements can be mathematically represented by a quantum instrument.
Let $\{ \Gamma_{j_1,k_1} \}_{j_1,k_1}$ denote the quantum instrument corresponding to the MP channel,
where every element $ \Gamma_{j_1,k_1}$ is a TP map
and $\sum_{j_1,k_1}\Gamma_{j_1,k_1}$
is a TP-CP map.
The element $\Gamma_1=\{\Gamma_{1,(j_1,k_1)}\}_{(j_1,k_1)}$
and $\Gamma_2=\{\Gamma_{2,(j_2,k_2)}\}_{(j_2,k_2)}$ 
of Alice and Bob  
are written as
\begin{align}\label{instruments}
\Gamma_{1,(j_1,k_1)}(\rho):= &P_{K_1|J_1}(k_1|j_1)
(\Tr \rho M_{1,j_1} )\rho_{k_1} \\
\Gamma_{2,(j_2,k_2)}(\rho):= &
P_{K_2|J_2}(k_2|j_2)
(\Tr \rho M_{2,j_2} )\rho_{k_2}.
\end{align}

With a process matrix $W_{S}$ and a quantum instrument,
the experimental correlations $P_{J_1,J_2,K_1,K_2}(j_1,j_2,k_1,k_2)$ can be obtained by the generalized Born's rule as \cite{OCB}

\begin{equation}
    P_{J_1,J_2,K_1,K_2}(j_1,j_2,k_1,k_2) = \Tr[W_S (C[\Gamma_{1,(j_1,k_1)}] \otimes C[\Gamma_{2,(j_2,k_2)}])].
    \label{eq:GBR}
\end{equation}
We illustrate the structure of Alice's MP channel in Fig.~(\ref{DI}) (Bob's is analogous). In our protocol, characterization of the experimental devices is unnecessary, i.e., the result of identification relies only on the correlations $P_{J_1,J_2,K_1,K_2}$, 
which is independent of the knowledge of the channels.

\section{Classification and identification by statistics}\label{S4}
After obtaining the correlations $P_{J_1,J_2,K_1,K_2}$, 
we make characterization-free
classification and identification of Charlie's strategies through two steps.
In step 1, we check the causal order of the channels 
and the existence of a memory  
using the
Markovian conditions among the random variables
$J_1,J_2,K_1,K_2$.
In step 2, we check nonlocality to identify the memory type if a memory exists.
We use the standard graphical notation for conditional independence.
For random variables $X,Y,Z$, the chain $X-Y-Z$ means that $X$ and $Z$
are conditionally independent given $Y$, i.e.,
$P_{X,Z|Y}=P_{X|Y}P_{Z|Y}$ (equivalently $X \perp Z \mid Y$).
We also write $A \perp B$ for (unconditional) independence.

\emph{Step 1- Order Identification.} First we consider the following 
condition on the operations
of Alice and Bob:
\begin{description}
\item[(S1)]
The sets 
$\{M_{1,j_1}\}$, $\{M_{2,j_2}\}$,
$\{\rho_{k_1}\}$, $\{\rho_{k_2}\}$ 
span the spaces ${\cal B}({\cal H}_{(I,1)})$,
${\cal B}({\cal H}_{(I,2)})$,
${\cal B}({\cal H}_{(O,1)})$, and
${\cal B}({\cal H}_{(O,2)})$, respectively.
This condition is called 
tomographically complete.
\end{description}
In this case, 
the sets 
$\{C[\Gamma_{1,(j_1,k_1)}]\}_{(j_1,k_1)}$ and
$\{C[\Gamma_{2,(j_2,k_2)}]\}_{(j_2,k_2)}$ 
span the spaces 
${\cal B}({\cal H}_{(I,1)}\otimes {\cal H}_{(O,1)})$,
and
${\cal B}({\cal H}_{(I,2)}\otimes {\cal H}_{(O,2)})$,
respectively.
Then, we have the following equivalence conditions.
\begin{theorem}\label{TH4}
Assume the condition (S1).
A physical process matrix $W_S$ satisfies
the conditions \eqref{SP1}, \eqref{SP3}, 
\eqref{SS1}, 
and \eqref{SS3}, 
if and only if
the distribution $P_{J_1,J_2,K_1,K_2}$ 
satisfies the Markovian conditions
$K_1- J_1 \perp J_2-K_2$,
$K_1- J_1- J_2-K_2$,
$J_1- K_1- J_2-K_2$, 
and $(J_1, K_1)- J_2-K_2$, respectively.
The same statements hold when exchanging the labels $1$ and $2$
(for direction $2\to 1$).
\end{theorem}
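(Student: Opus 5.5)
The plan is to write the joint distribution in a factorized form and treat each of the four equivalences through linear algebra. Substituting the instruments \eqref{instruments} into the generalized Born rule \eqref{eq:GBR}, and using $C[\Gamma_{i,(j_i,k_i)}]=P_{K_i|J_i}(k_i|j_i)\,M_{i,j_i}^{T}\otimes\rho_{k_i}$ (up to the transpose convention), gives
\begin{equation*}
P_{J_1,J_2,K_1,K_2}(j_1,j_2,k_1,k_2)=P_{K_1|J_1}(k_1|j_1)P_{K_2|J_2}(k_2|j_2)\,Q(j_1,k_1,j_2,k_2),
\end{equation*}
where $Q(j_1,k_1,j_2,k_2):=\Tr[W_S(M_{1,j_1}^T\otimes\rho_{k_1}\otimes M_{2,j_2}^T\otimes\rho_{k_2})]$.

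Every Markov condition in the statement is then a statement about how $Q$ depends on $k_1$ and $k_2$. For example, $K_1-J_1-J_2-K_2$ combined with the structure above amounts to $Q$ not depending on $(k_1,k_2)$. By (S1), this holds if and only if the linear functional $X\mapsto\Tr[W_S X]$ factors through the partial trace over $(O,1),(O,2)$, that is, $W_S=\Tr_{(O,1),(O,2)}W_S\otimes I_{(O,1),(O,2)}/(d_{O,1}d_{O,2})$, which is exactly \eqref{SP3}. The same reasoning handles the other cases:
\begin{itemize}
\item Independence of $k_2$ alone, i.e.\ $(J_1,K_1)-J_2-K_2$, is equivalent to \eqref{SS3}.
\item $J_1-K_1-J_2-K_2$ means $k_2$ drops out and, in addition, the $(J_2)$-marginal given $(J_1,K_1)$ depends only on $K_1$. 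This should yield $W_S=\rho_1\otimes C\otimes I$, giving \eqref{SS1}, with $C$ a valid Choi matrix by positivity and the trace conditions.
\item $K_1-J_1\perp J_2-K_2$ adds product structure, i.e.\ \eqref{SP1}.
\end{itemize}

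For the ``only if'' direction I plug each process matrix into $Q$ and check directly. Under \eqref{SP1}, $Q=\Tr[\rho_1M_{1,j_1}]\Tr[\rho_2M_{2,j_2}]\Tr\rho_{k_1}\Tr\rho_{k_2}$, which gives the product form. Under \eqref{SS1}, $Q=\Tr[\rho_1 M_{1,j_1}]\cdot\Tr[C[\Lambda_C](\rho_{k_1}\otimes M_{2,j_2}^T)]$, so $P=P(j_1)P(k_1|j_1)P(j_2|k_1)P(k_2|j_2)$, which is a Markov chain.

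For the ``if'' direction I first convert the Markov condition into a factorization of $Q$. The key point is that the chain $K_1-J_1-\cdots$ combined with the fixed factor $P_{K_1|J_1}$ forces $Q$ to be independent of $k_1$ on the support. Then, since the operators $M_{i,j_i}^T\otimes\rho_{k_i}$ span the full operator space by (S1), an identity of the form $Q(j_1,k_1,j_2,k_2)=\Tr[W'(\cdots)]$ for a candidate $W'$ of the required form lifts to $W_S=W'$.

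I expect the main obstacle to be this first conversion. Conditional independence constrains only ratios of $P$, not $Q$ itself. So I must show, for example, that $P(k_2|j_2,j_1,k_1)=P(k_2|j_2)$ forces $Q(\cdot,k_2)$ to be constant in $k_2$, rather than merely proportional. The hypothesis $P_{K_i|J_i=m}\neq P_{K_i|J_i=n}$ alone may not suffice for this. I would instead use normalization: $\sum_{k_2}Q=\Tr[W_S(M\otimes\rho\otimes M_2\otimes I)]\cdot(\text{const})$ by the no-signaling condition \eqref{G1}. From this I would argue that the conditional $P(k_2|\ldots)$ equals $P_{K_2|J_2}(k_2|j_2)\,Q/\sum_{k_2'}(\cdots)$, and that equality for all contexts forces $Q$ to be independent of $k_2$. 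Zero-probability events need separate care.
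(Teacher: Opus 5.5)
Your architecture is the paper's. You factor out $P_{K_1|J_1}P_{K_2|J_2}$, read each Markov chain as a constraint on how $Q$ depends on $(k_1,k_2)$, and lift via (S1). Your $(J_1,K_1)-J_2-K_2$ case is the paper's Step~1. Handling \eqref{SP3} and \eqref{SP1} directly (constancy of $Q$ in both $k$'s, respectively full factorization) is a legitimate shortcut for the paper's route through the two directional cases and Lemma~\ref{LL2}.

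The gap is at the step you flag yourself, and the fix you propose does not work. The identity $\sum_{k_2}Q=\Tr[W_S(\cdots\otimes I)]\cdot\mathrm{const}$ is false, because $\sum_{k_2}\rho_{k_2}$ need not be proportional to $I_{(O,2)}$. More fundamentally, \eqref{G1} constrains $W_S$ only after Alice's systems are traced out. It says nothing about a single slice $M_{1,j_1}\otimes\rho_{k_1}$, and such a slice genuinely depends on $k_2$ for, e.g., $W_{S_{N,2\to 1}}$. Without further input, the Markov condition only gives $Q(j_1,k_1,j_2,k_2)=\lambda(j_1,k_1,j_2)\,r(j_2,k_2)$. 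Here $r$ is the ratio of the conditional of $K_2$ given $J_2$ computed from the joint law (the one entering the Markov chain) to Bob's chosen $P_{K_2|J_2}$. This is exactly the ``proportional but not constant'' situation you were worried about.

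The missing fact is that these two conditionals coincide. It comes from no-signaling applied to Alice's \emph{entire} channel rather than to a slice. Summing over $(j_1,k_1)$ contracts $W_S$ with the Choi matrix $\sum_{j_1}M_{1,j_1}\otimes\sigma_{j_1}$ of her measure-and-prepare channel, where $\sigma_{j_1}:=\sum_{k_1}P_{K_1|J_1}(k_1|j_1)\rho_{k_1}$. Substituting \eqref{G3} and then \eqref{G1} shows that the resulting operator on $\mathcal{H}_{I,2}\otimes\mathcal{H}_{O,2}$ has the form $X\otimes I_{(O,2)}$. Hence $P_{J_2,K_2}(j_2,k_2)=P_{K_2|J_2}(k_2|j_2)\Tr[XM_{2,j_2}]$.

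With this, the Markov condition becomes $Q(\cdot,k_2)=\sum_{k_2'}P_{K_2|J_2}(k_2'|j_2)\,Q(\cdot,k_2')$ for every $k_2$ with $P_{K_2|J_2}(k_2|j_2)>0$, which is the paper's \eqref{B87N}. You therefore also need $P_{K_i|J_i}(k_i|j_i)>0$ for all pairs, which is required anyway for $\{C[\Gamma_{i,(j_i,k_i)}]\}$ to span. The paper uses this identification tacitly, writing $P_{K_2|J_2}$ for both objects. The mirror statement for $K_1$, obtained from \eqref{G2} and \eqref{G3}, is used in \eqref{BND4}.

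With the identification in hand, your \eqref{SS1} case, which you only assert, also closes. The chain gives $Q=P_{J_1}(j_1)P_{J_2|K_1}(j_2|k_1)$. Each factor is the restriction of a linear functional: divide a slice of $Q$ by a nonzero value of the other factor. So (S1) yields $W_S=A\otimes B$ with $B$ carrying a factor $I_{(O,2)}$. Positivity together with \eqref{G2} then makes the rest of $B$ a multiple of a Choi matrix $C[\Lambda_C]$.
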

The proof of theorem \ref{TH4} is provided in the appendix \ref{apdxA}. Theorem \ref{TH4} establishes a one-to-one correspondence between the hierarchies 
\begin{align}
\begin{array}{ccccc}
{\cal S}_{N,1\to 2} & \supset &{\cal S}_{I}&\subset & {\cal S}_{N,2\to 1} \\
\cap &  &\cap & & \cap  \\ 
{\cal S}_{Q,1\to 2} & \supset &{\cal S}_{Q}&\subset & {\cal S}_{Q,2\to 1} 
\end{array}\label{HI1}
\end{align} 
and the Markov Chain structure hierarchies 
\begin{align}
\begin{array}{ccccc}
J_1- K_1- J_2-K_2 & \supset &(K_1,J_1) \perp (J_2,K_2)&\subset & K_1- J_1- K_2-J_2 \\
\cap &  &\cap & & \cap  \\ 
(J_1, K_1)- J_2-K_2 & \supset &K_1-J_1-J_2-K_2 &\subset & K_1- J_1- (J_2,K_2)
\end{array}
.\label{HI4}
\end{align}
Therefore it provides 
a necessary and sufficient condition 
to identify the strategy classes 
in hierarchies \eqref{HI1} 
by identifying
their corresponding Markovian conditions
in \eqref{HI4}.

The identification of Markovian conditions can be done 
by making hypothesis testing \cite{8231191}.
In our protocol,
we use $\chi^{2}$ tests to 
do the hypothesis testing, 
for which details are put in the appendix \ref{apdxB}. 
The identification 
starts from the lower level classes to the higher level classes. 
If the Markovian conditions of a lower 
class is accepted,
the strategy belongs to this class, 
otherwise, we go on checking the Markovian conditions
of other same level classes and higher level classes
until a Markovian condition is accepted.
Since the Markovian condition does not depends on the 
device characterization, this protocol 
is characterization-free.
See Fig~\ref{Process} and Appendix \ref{apdx:charfree} for more details.

\begin{figure}
    \centering
    \includegraphics[width=0.65\linewidth]{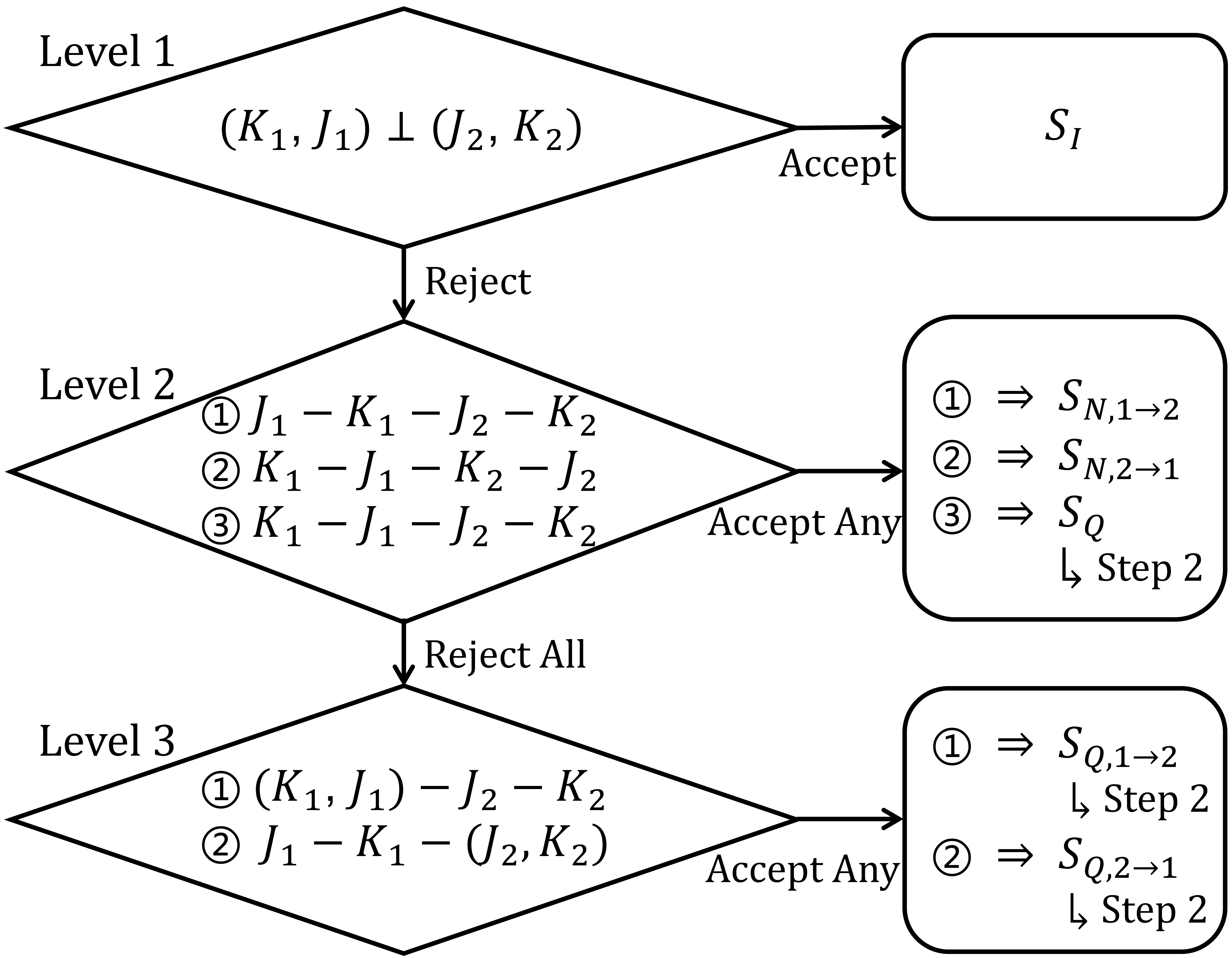}
    \caption{The procedure of identification using correlations $P_{J_1,J_2,K_1,K_2}$. In step 1, the markovian conditions in \eqref{HI4} are classified into 3 levels. The hypothesis testing starts from level 1. If the condition in a lower level is rejected, we check the conditions of the next level, until a condition is accepted. If a nontrivial memory is detected in step 1, we go to step 2 to check if it is a classical or a quantum memory by checking nonlocality.}
    \label{Process}
\end{figure}

So next we prove that when Alice and Bob just make a
simplest two-outcome random POVM and random state preparation
under  a non-adaptive environment, respectively,
the performance of identification can be 
almost the same as the tomographically complete case.

Formally,
we consider the following condition:
\begin{description}
\item[(S2)]
Alice and Bob randomly choose
positive semidefinite operators
$M_{1}$ on ${\cal H}_{I,1}$ and $M_{2}$ on ${\cal H}_{I,2}$
such that $0\le M_{1}\le I$ and $0\le M_{2}\le I$
according to independent continuous random variables $T_1,T_2$
whose distributions are absolutely continuous on the spaces ${\cal B}({\cal H}_{(I,1)})$ and
${\cal B}({\cal H}_{(I,2)})$, respectively. 
Then Alice and Bob use the two-outcome POVMs
$\{M_{1}, I-M_{1}\}$ and $\{M_{2}, I-M_{2}\}$, respectively.

Also, each of the sets $\{\rho_{1,1},\rho_{1,2}\}$ and $\{\rho_{2,1},\rho_{2,2}\}$
contains two quantum states chosen randomly and independently
according to independent random variables $T_3,T_4$,
whose distributions are absolutely continuous 
on ${\cal D}({\cal H}_{O,1})$ and ${\cal D}({\cal H}_{O,2})$,
where ${\cal D}({\cal H}_{(X)})$ denotes the space 
of density matrices on Hilbert space ${\cal H}_{X}$.
\end{description}

Similarly, the protocol satisfying Condition (S2) is also
characterization-free.
Then we have the following theorem:

\begin{theorem}\label{TH5}
Assume that 
(i) Charlie's strategy does not depend on Alice's and Bob's operations, and
(ii) Alice and Bob randomly choose the operations in (S2) according to
$T_1,T_2,T_3,T_4$.
When 
Charlie's strategy does not belong to the strategy class $S_X$
($X=I,C,N_{1\to 2},\ldots$),
then the induced distribution of $(J_1,K_1,J_2,K_2)$ depends on $T:=(T_1,T_2,T_3,T_4)$,
and does not satisfy $Mar(S_{X})$ with probability $1$
with respect to the randomness in $T$.
\end{theorem}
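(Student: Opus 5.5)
The argument is a standard "polynomial that is not identically zero vanishes only on a null set". Fix Charlie's process matrix $W_S$. By assumption (i), $W_S$ does not depend on $T$. The observed distribution is given by the generalized Born rule \eqref{eq:GBR} with the instruments \eqref{instruments}. The pairs $\{M_1, I-M_1\}$ and $\{\rho_{1,1},\rho_{1,2}\}$ (and likewise for Bob) enter linearly, so every probability $P_{J_1,J_2,K_1,K_2}(j_1,j_2,k_1,k_2)$ is a polynomial (in fact multi-affine) in the real coordinates of $T=(T_1,T_2,T_3,T_4)$. The conditional distributions $P_{K_i|J_i}$ are fixed and satisfy the nondegeneracy condition.

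Each Markovian condition $Mar(S_X)$ can be written as a finite family of polynomial equations in the joint probabilities. A chain $X-Y-Z$ is equivalent to
\[
P_{XYZ}(x,y,z)\,P_Y(y)=P_{XY}(x,y)\,P_{YZ}(y,z)\quad\text{for all } x,y,z,
\]
and an independence $A\perp B$ is handled the same way. Hence the set
\[
\mathcal{Z}_X:=\{T:\ P^{T}\text{ satisfies } Mar(S_X)\}
\]
is the common zero set of finitely many real polynomials $f_\ell(T)$. Because the distributions of $T_1,\dots,T_4$ are absolutely continuous, and a nonzero real polynomial vanishes only on a Lebesgue-null set, it suffices to show one thing: when $W_S\notin S_X$, at least one $f_\ell$ is not identically zero. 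Equivalently, we need a single parameter value $T^*$ with $T^*\notin\mathcal{Z}_X$. This value may lie in the closure of the parameter domain, or anywhere in $\mathbb{R}^{\dim}$, since a polynomial vanishing on a set of positive measure vanishes identically. Dependence on $T$ follows by the same reasoning: if the distribution did not depend on $T$, every $T$ would give the same point, which we will show fails $Mar(S_X)$ for some $T$.

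The main step is producing such a $T^*$, and this is where Theorem~\ref{TH4} is used. Since $W_S\notin S_X$, Theorem~\ref{TH4} gives a tomographically complete choice, i.e.\ one satisfying (S1), whose statistics violate $Mar(S_X)$. That choice, however, uses many outcomes and states, while (S2) only allows two outcomes and two states. I would bridge the gap by working with the explicit form of the polynomials. The Markov conditions involve only pairs of values $(j_1,j_2)$ and $(k_1,k_2)$, each ranging over $\{1,2\}$. A violation for $W_S$ should therefore be witnessed by a bilinear expression of the form
\[
\Tr\big[W_S\,(A_1\otimes B_1\otimes A_2\otimes B_2)\big]\neq 0,
\]
where $A_i$ is a difference of binary POVM elements and $B_i$ is a difference of the two prepared states, i.e.\ traceless Hermitian operators. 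The step I would do concretely is to show that the failure of the operator identity defining $S_X$ (for instance \eqref{SP1} or \eqref{SS3}) yields some nonzero coefficient of this kind. Then I would check that the two-outcome, two-state family realizes enough directions, namely arbitrary $M_i$ and arbitrary $\rho_{i,1}-\rho_{i,2}$, up to scaling, in the interior of the domain. From this, the polynomial $f_\ell$ restricted to the parameter space is nonzero as a polynomial.

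The obstacle I expect is the classes defined through nonlinear constraints. For the parallel and sequential no-memory cases, product-type equations relate $W_S$ to its marginals, so the violation is degree two in $W_S$ and one must rule out accidental cancellation among the product terms. For the classical classes $S_C$, $S_{C,1\to2}$, membership is separability, which is not algebraic. There I would restrict to the Markov-chain conditions actually assigned in Step 1 (memory existence and order). I would argue that separability is detected in Step 2 via nonlocality rather than through $Mar$, so that $Mar(S_C)$ coincides with $Mar(S_Q)$ and the claim reduces to the previous cases.
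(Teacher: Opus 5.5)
Your general framework is sound: each $P(j_1,k_1,j_2,k_2)$ is multi-affine in $T$, $Mar(S_X)$ is a finite system of polynomial equations, and one witness $T^*$ anywhere suffices. The gap is that the step carrying all the content is only announced. You never show that, for $W_S\notin S_X$, some defining polynomial is not identically zero on the (S2) domain, and you leave open the cancellation problem you raise for the product-type classes. As you note yourself, Theorem~\ref{TH4} cannot supply $T^*$.

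The missing idea is to use the no-self-signaling constraints \eqref{G1}, \eqref{G2} to pin down the marginals. After that, each defining polynomial factors into preparation weights $\pi_i:=P_{K_i|J_i}$ times a multilinear form in the local operators, whose coefficient is a \emph{fixed} operator built from $W_S$. For $(J_1,K_1)-J_2-K_2$: Bob cannot signal to himself, so $P(j_2,k_2)=\pi_2(k_2|j_2)P(j_2)$, and at $k_2=1$
\[
P(j_1,k_1,j_2,k_2)P(j_2)-P(j_1,k_1,j_2)P(j_2,k_2)=\pi_1(k_1|j_1)\pi_2(1|j_2)\pi_2(2|j_2)P(j_2)\,\Tr\big[W_S\big(M_{1,j_1}\otimes\rho_{1,k_1}\otimes M_{2,j_2}\otimes(\rho_{2,1}-\rho_{2,2})\big)\big].
\]
Here $P(j_2)$ is a nonzero polynomial, and the trace is a nonzero polynomial iff $W_S\neq\frac{1}{d_{(O,2)}}\Tr_{(O,2)}W_S\otimes I_{(O,2)}$, i.e.\ iff $W_S\notin S_{Q,1\to2}$.

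If $W_S\in S_{Q,1\to2}$ (otherwise $Mar(S_{N,1\to2})$ already fails through its $(J_1,K_1)-J_2-K_2$ part), write $W_S=V\otimes I_{(O,2)}$. Then \eqref{G2} gives $P(j_1,k_1)=\pi_1(k_1|j_1)\Tr[\sigma_1 M_{1,j_1}]$ with $\sigma_1:=\Tr_{(O,1),(I,2)}V/d_{(O,1)}$. The $J_1-K_1-J_2$ polynomial at $j_1=1$ then equals
$\pi_1(k_1|1)\pi_1(k_1|2)\Tr[(V-\sigma_1\otimes\Tr_{(I,1)}V)(M_1\otimes\rho_{1,k_1}\otimes M_{2,j_2})]$,
and its coefficient vanishes iff $W_S\in S_{N,1\to2}$.

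So the quadratic dependence on $W_S$ is harmless: the product of marginals is absorbed into a fixed nonzero operator, and a nonzero multilinear form cannot vanish on the open (S2) domain. This is exactly Lemma~\ref{distprod}, the paper's key tool, which the paper applies to its ``slice'' operators. $S_Q$ and $S_I$ then follow from $S_Q=S_{Q,1\to2}\cap S_{Q,2\to1}$, $S_I=S_{N,1\to2}\cap S_{N,2\to1}$, and Lemma~\ref{LL2}.

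There are three further problems. \emph{Full support.} You need $P_{K_i|J_i}$ to have full support (at least to be non-deterministic), not merely the nondegeneracy condition you cite. With $K_i=J_i$, the chains $J_1-K_1-J_2$ and $(J_1,K_1)-J_2-K_2$ hold trivially for every $W_S$, so no witness exists. The factors $\pi_2(1|j_2)\pi_2(2|j_2)$ and $\pi_1(k_1|1)\pi_1(k_1|2)$ above show exactly where this enters; the paper also uses it implicitly, dividing by $P_{K_2|J_2}$ in the proof of Theorem~\ref{TH4}.

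\emph{Classical classes.} Your treatment of $S_C$ and $S_{C,1\to2}$ does not reduce to the earlier cases. If $Mar(S_C)=Mar(S_Q)$, every entangled $W_S\in S_Q\setminus S_C$ satisfies $Mar(S_C)$ for all $T$, so the claim is false for $X=C$. The honest statement is the theorem for the six classes of Theorem~\ref{TH4}, which is also all the paper's proof covers.

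\emph{Dependence on $T$.} Your argument here is a non sequitur: a $T$-independent distribution violating $Mar(S_X)$ at one $T$ would simply violate it everywhere, which is no contradiction. Instead, note that $P(J_1=1)$ is a nonzero linear function of $M_1$.
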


The ``probability-one'' statement refers to the identifiability in the ideal (infinite-data) limit; finite-sample errors are handled via standard hypothesis testing.
Due to this theorem, once the Markovian condition $Mar(S_{X})$ holds,
Alice and Bob confirm that Charlie's strategy belongs to a strategy class $S_{X}$.

The basic idea of the proof is simple.
A strategy will be misclassified
only if the statistics it produces
under condition (S2)
is indistinguishable
from some strategy in another class. 
However, such case happens only if 
the random  
states $\{\rho_{k_1}\}$, $\{\rho_{k_2}\}$ 
and measurements $\{M_{1,j_1}\}$, $\{M_{2,j_2}\}$ 
lie in a specific lower-dimensional 
subspace of 
${\cal B}({\cal H}_{(I,1)}\otimes{\cal H}_{(O,1)}\otimes{\cal H}_{(I,2)}\otimes {\cal H}_{(O,2)})$, which is of Lebesgue measure zero. We leave the detail proof in the appendix \ref{apdxC}.

Theorem \ref{TH5} provides a 
characterization-free method of strategy identification with 
just $4\times 4=16$ random MP channel operations.
In such case,
the Markovian conditions in \eqref{HI4} are 
strict necessary conditions of their corresponding 
strategy classes in \eqref{HI1},
while being their sufficient conditions 
with probability almost one. 
Therefore Alice and Bob can still   
identify Charlie's strategy 
according to the procedure in Fig.~\ref{Process}.

\emph{Step 2- Memory Type Identification.}
To distinguish the classical memory strategies 
${\cal S}_{C,1\to 2} $, ${\cal S}_{C}$, $ {\cal S}_{C,2\to 1}$ 
among their corresponding quantum strategies
${\cal S}_{Q,1\to 2} $, ${\cal S}_{Q}$, $ {\cal S}_{Q,2\to 1}$,
we check the nonlocality of the correlations. 
Eqs.~\eqref{SP2} and \eqref{SS2} imply that 
the nonlocality check should be done across 
subsystems $\mathcal{H}_{I,1}$ and $\mathcal{H}_{I,2}$ for parallel strategies, and across 
$\mathcal{H}_{I,1}$ and $\mathcal{H}_{O,1}\otimes\mathcal{H}_{I,2}$ for sequential strategies.
For this aim,
we check the nonlocality by the 
violation of Bell-type inequalities with marginal correlations $P_{J_{1}J_{2}}$ \cite{clauser1969proposed, collins2002bell}.
Such test always correctly detect a genuine quantum memory
when nonlocality is found,
but it could also misidentify a genuine quantum memory
into a classical memory when a random channel setting fails to
reveal the nonlocality via Bell inequalities,
or when an entangled system is local \cite{werner1989quantum,barrett2002nonsequential}. 

In Table~\ref{comparison}, we compare three existing methods for strategy identification. Our protocol is robust and resource-efficient, but detects the memory type with probability less than one.

\begin{table*}[htbp]
\caption{Comparison between existing methods for detecting definite-order strategy classes.}
\label{comparison}
\begin{center}
\begin{tabular}{|l|c|c|c|c|}
\hline
\textbf{Method} & 
\makecell{\textbf{Quantum Process}\\ \textbf{Tomography}} & 
\makecell{\textbf{Pseudo-Density} \\ \textbf{Matrix \cite{FJV}}} & 
\textbf{Our Protocol (S1)} &
\textbf{Our Protocol (S2)}\\ 
\hline
\textbf{Device Knowledge} & Measurements \& States & Measurements & Characterization-free & Characterization-free \\ 
\hline
\textbf{Cost (Settings)} & $d_{(I,1)}^2d_{(I,2)}^2d_{(O,1)}^2d_{(O,2)}^2$ & $d_{(O,1)}^2d_{(O,2)}^2$ & $d_{(I,1)}^2d_{(I,2)}^2d_{(O,1)}^2d_{(O,2)}^2$ & 16 (step 1)+16 (step 2) \\ 
\hline
\textbf{Order Identification} & Yes & Yes & Yes& Yes \\ 
\hline
\textbf{Memory Identification} & Always & Not applicable & \makecell{Success with \\ probability} & \makecell{Success with \\ probability} \\ 
\hline
\end{tabular}
\end{center}
\end{table*}

\section{experimental identification of different strategies on a fixed optical setting}
To demonstrate our protocol on an optical platform,
we designed a setup that can realize 
all definite-order strategies for two-dimensional systems,
and test the identification of strategies $S_{I}$, $S_{N,1\to 2}$, $S_{C}$, $S_{C,1\to 2}$, $S_{Q}$, $S_{Q,1\to 2}$.
Our identification contains two steps as in Fig.~\ref{Process}. 
In the first step we choose 
two-element POVMs and two-element state preparation 
for Alice and Bob respectively 
to identify the strategies $S_{I}$, $S_{N,1\to 2}$, $S_{Q}$, $S_{Q,1\to 2}$.   
A strategy class is identified if its Markovian condition is accepted as a null hypothesis in a $\chi^2$ test.  
The significance level of the $\chi^2$ tests is chosen to be $\alpha=0.05$.

If the strategy is identified to be $S_{Q}$ or $S_{Q,1\to 2}$,
we do the second step
to identify the memory type.
We detect nonlocality across 
variables $J_{1}$ and $J_{2}$
by testing the values of the set of CHSH inequalities
\begin{equation}\label{CHSHset}
        |\sum_{i,j=1}^{2}c_{ij}\langle A_iB_j\rangle|\leq 2,
\end{equation}
where $A_i$ and $B_j$ are POVM operators of Alice and Bob,  $|c_{ij}|=1$ for any $i, j\in \{1,2\}$, and $\prod _{i,j=1}^{2} c_{ij}=-1$. 
If any CHSH inequality is violated, 
the memory is quantum, otherwise it is treated as classical.
For both steps, we choose 7 different settings for Alice and Bob's MP channels, 
in which the measurement and state operators are pure,
to study the overall performance of our protocol.

As depicted in Fig.~\ref{fig:setup}, our experimental demonstration utilizes heralded single photons generated via spontaneous parametric down-conversion (SPDC) in a type-II PPKTP crystal, with quantum states encoded in hybrid polarization and path degrees of freedom (DOFs). A specific ``DOF switch'' allows Charlie to select between parallel and sequential strategies. This configuration permits the second player to operate effectively across different DOFs without altering the optical hardware. Furthermore, the operational order of the players is exchanged by switching the propagation direction (clockwise vs. counter-clockwise) via mirrors M1--M3. Crucially, for any strategy, Alice and Bob implement the measurement-and-preparation process using identical settings and detection modules; thus, they are unable to infer the underlying order from their local configurations. The detailed information about 
the experimental setup and the MP channels are in the appendix \ref{ApdxD} and \ref{apdxF} respectively.

\begin{figure}[h]
    \centering
  \includegraphics[width=8cm]{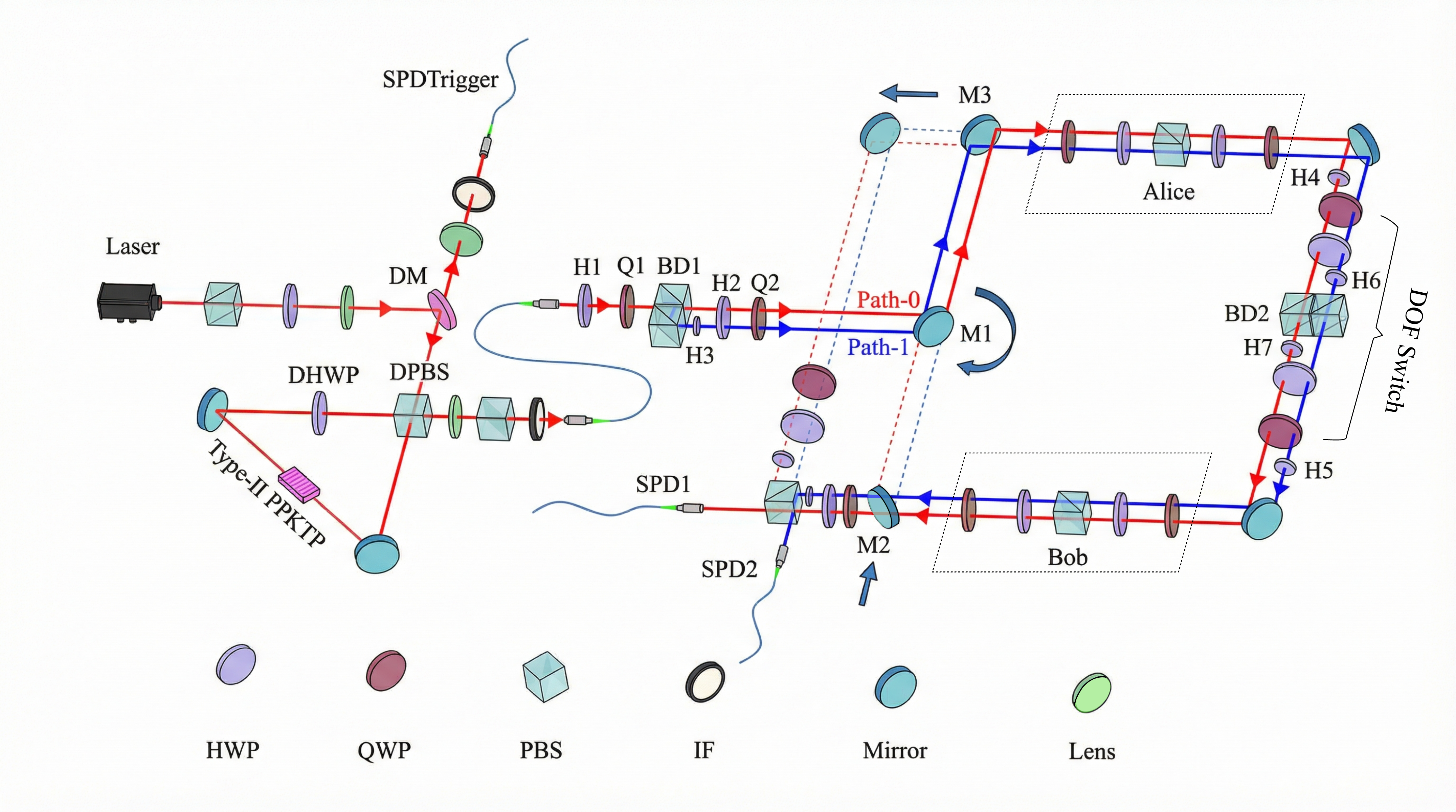}
  \caption{Experimental setup. 
  Red paths and light-gray dashed lines represent photon propagation directions of Alice $\rightarrow$ Bob and Bob $\rightarrow$ Alice, respectively. Detailed configurations for different strategies are provided in the Appendix \ref{ApdxD}.
}
  \label{fig:setup}
  \end{figure}

The results of our experiments are listed in Table.~\ref{result}.
For the first step, we correctly identify all strategies with all 7 settings, which supports the theoretical result of theorem \ref{TH5}. The details of the test are in the appendix \ref{apdxG}. 
For the second step, most quantum memories can be correctly identified. But there are also misidentified cases. This is because the detection of nonlocality with CHSH inequalities does not succeed for all measurement settings in principle. 
Also, the effective state we measure for strategy $S_{Q,1\to 2}$ is a mixture of a maximally entangled state and a local state. 
Such mixture can destroy the nonlocality. 
A better choice of the realization of $S_{Q,1\to 2}$ may help to
improve the detection probability.

\begin{table}[h]
\centering
\caption{Settings of MP channels that identify and misidentify each strategy. Only setting 3 misidentify strategy $S_{Q,1\to 2}$ as $S_{C,1\to 2}$ because failing to detect nonlocality.}
\label{result}
\begin{tabular}{|c|c|c|}
\hline
\textbf{Strategy} & \textbf{Identified} & \textbf{Misidentified} \\ 
\hline
$S_I$ & Set.1-- Set.7 & ---  \\
\hline
$S_C$ & Set.1-- Set.7 & ---  \\
\hline
$S_Q$ & Set.1-- Set.7 & --- \\
\hline
$S_{N,1\to2}$ & Set.1-- Set.7 & --- \\
\hline
$S_{C,1\to2}$ & Set.1-- Set.7 & --- \\
\hline
\multirow{2}{*}{$S_{Q,1\to2}$} & Set.1-- Set.2, & Set.3 \\
              & Set.4-- Set.7  & ($S_{C,1\to2}$)\\ 
\hline
\end{tabular}
\end{table}

\section{Conclusion} 
In this work, we have proposed a characterization-free protocol for the classification and identification of several definite-order strategies mediating two quantum channels, which is robust against device imperfections and adversarial interventions. 
We have established a one-to-one correspondence between 
the strategy classes and the Markovian conditions 
of the input-output statistics 
under a tomographically complete setting, 
and further proved that even with random 
two-outcome measurements and two-state preparations,
the protocol retains its discriminative power with probability one, significantly reducing experimental complexity.

We have also implemented our protocol on an optical platform, demonstrating reliable identification of 
the channel order and the existence of a quantum memory. 
However, to distinguish classical memory among quantum ones, 
an additional nonlocality test is required.

Our results offer an experimentally feasible framework for characterizing unknown causal structures.
It is scalable in principle, thus could be useful for studying the causal structures in quantum networks or large-scale quantum processors when it is extended into multi-player settings. 
Also, the characterization-free property of our framework
makes it possible to study the causal structure 
of more general frameworks such as the general probabilistic theory \cite{plavala2023general,janotta2014generalized,yu2024measurement}. These are all interesting directions for future research.

An interesting direction for future work is to develop practical methods to realize or certify uncontrolled calibration offsets that are sufficiently rich to effectively satisfy (S2). 
In particular, as explained in Appendix \ref{apdx:charfree},
it would be valuable to identify diagnostic criteria for when quasi-static drifts across experimental runs provide the required generic continuous variability, thereby reducing or even eliminating the need for engineered, dimension-aware randomization.

\section{Data Availability}
The experimental data produced in this study, along with the relevant codes, are available in repository \href{https://github.com/BC-YU/Char-Free-Identification}{https://github.com/BC-YU/Char-Free-Identification}.

\section{Acknowledgement}
M.H. and B.Y. was supported in part by
the Guangdong Provincial Quantum Science Strategic Initiative (Grant No. GDZX2505003).
M.H. was supported in part by
the General R\&D Projects of 1+1+1 CUHK-CUHK(SZ)-GDST Joint Collaboration Fund (Grant No. GRDP2025-022), 
the Shenzhen International Quantum Academy (Grant No. SIQA2025KFKT07),
and the National Natural Science Foundation of China under Grant 62171212.
Y.Z. is supported by the National Natural Science Foundation of China (Grant No. 12574401), the Guangdong Project (Grant No. 2024TQ08A680), the Guangdong Provincial Quantum Science Strategic Initiative (GDZX2403005, GDZX2403001, and GDZX2403002).
B.Y. was supported by the GuangDong Basic and Applied Basic Research Foundation (Grant No. 2024A1515012405).

\bibliography{ref}

\appendix

\section{Operational assumptions and the meaning of ``characterization-free''}
\label{apdx:charfree}

\subsection{What we mean by ``characterization-free''.}
Throughout this work, \emph{characterization-free} means that the inference does not require an explicit calibrated operator-level description of the parties' devices (e.g., matrix representations, eigenbases, or an implementation model of the POVM and preparation operators). 
Instead, our guarantees rely on explicit \emph{operational non-degeneracy} assumptions on the accessible settings, stated as (S1) and (S2) in the main text.
We emphasize that ``characterization-free'' is not the same as device independence in the usual sense, but only refers to not assuming calibrated operator descriptions.

\subsection{Condition (S1): spanning as an operational non-degeneracy requirement}
Condition (S1) assumes that the chosen measurement and preparation operators span the relevant local Hermitian-operator spaces. 
Operationally, this presumes enough calibration/coordination to speak about spanning and linear independence of the implemented settings, while not requiring their explicit calibrated forms.
Under (S1), Theorem~\ref{TH4} yields an if-and-only-if identification criterion, but requires 
$d_{(I,1)}^2 d_{(I,2)}^2 d_{(O,1)}^2 d_{(O,2)}^2$ settings to generate tomographically complete correlations.

\subsection{Condition (S2): generic full-dimensional variability (and when it is realistic)}
Condition (S2) replaces explicit spanning by a \emph{genericity} requirement: the implemented POVM elements and prepared states vary continuously in a sufficiently rich (full-dimensional) manner so that measure-zero ``bad sets'' are avoided almost surely. 
This requirement does \emph{not} ask for calibrated operator forms, but it is not automatically easier to satisfy than (S1): achieving full-dimensional variability typically requires many independent degrees of freedom (scaling with the real dimension of the relevant operator manifolds, $\sim d^2$), and thus may necessitate knowledge of the local dimension (or at least an upper bound) when the variability must be engineered.

In the present work, we take (S2) in an engineered sense: the experimenter deliberately randomizes the local settings so as to ensure sufficiently rich (full-dimensional) exploration, which keeps the analysis characterization-free while still avoiding any calibrated operator-level characterization.
A complementary \emph{drift-driven} route may also realize (S2) effectively, where uncontrolled quasi-static calibration offsets are fixed within a run but vary across runs and explore sufficiently many independent directions; 
if this occurs, the dimension bound (or an upper bound) needed for engineering may no longer be necessary.
Establishing practical diagnostic criteria to certify when such drift-driven richness holds is therefore an important direction for future work.

\section{Information of the Settings}\label{ApdxD}
In this appendix, we give the specific information of Alice and Bob's 7 different MP channel, and the parameters of realizing Charlies' strategies as well. Table \ref{TE1} gives the 7 settings of measurement operators and preparation states for testing Markovian conditions in \eqref{HI4} (step 1). 
In this case, each party makes a two-outcome POVM, 
and prepares a output state out of two candidates.
Except for setting 1 which is specially chosen as a reference, the operators in the remaining 6 settings are randomly chosen according to the Haar measure.

The conditional probability distributions $p(K_{i}|J_{i})$
follows the condition 
\begin{equation}
   p(K_{i}=J_{i})= 0.65
\end{equation}
for $i=1,2$. That is, when Alice obtains a measurement result $J_{1}=1$, she prepares the state corresponding to $K_{1}=1$ with probability $p=0.65$, and prepares the state corresponding to $K_{1}=2$ with $1-p=0.35$. 
When Alice obtains $J_{1}=2$, she prepares the state 
$K_{1}=2$ with probability $p=0.65$, and prepares the state corresponding to $K_{1}=1$ with $1-p=0.35$. Bob's scheme is analogous.

\setcounter{table}{0}
\renewcommand{\thetable}{E\arabic{table}}
\setcounter{figure}{0}
\renewcommand{\thefigure}{E\arabic{figure}}
\begin{table}[H]
\centering
\small
\begin{tabular}{|c|c|c|c|c|}
\hline
\textbf{Setting} & $\mathbf{M_{1,j_1}}$ & $\mathbf{M_{2,j_2}}$ & $\mathbf{\rho_{1,k_1}}$ & $\mathbf{\rho_{2,k_2}}$ \\
\hline
1 & $(0,0)$ & $(45,0)$ & $(0,0)$ & $(0,0)$ \\
  & $(180,0)$ & $(-135,0)$ & $(45,0)$ & $(45,0)$ \\
\hline
2 & $(36,103)$ & $(15,340)$ & $(0,0)$ & $(0,0)$ \\
  & $(144,283)$ & $(165,160)$ & $(152,158)$ & $(45,0)$ \\
\hline
3 & $(18,76)$ & $(87,50)$ & $(0,0)$ & $(0,0)$ \\
  & $(162,258)$ & $(93,230)$ & $(149,17)$ & $(45,0)$ \\
\hline
4 & $(2,177)$ & $(62,143)$ & $(0,0)$ & $(0,0)$ \\
  & $(178,357)$ & $(118,323)$ & $(38,85)$ & $(45,0)$ \\
\hline
5 & $(12,212)$ & $(1,271)$ & $(0,0)$ & $(0,0)$ \\
  & $(168,32)$ & $(179,91)$ & $(172,236)$ & $(45,0)$ \\
\hline
6 & $(25,163)$ & $(8,270)$ & $(0,0)$ & $(0,0)$ \\
  & $(155,343)$ & $(172,90)$ & $(37,248)$ & $(45,0)$ \\
\hline
7 & $(12,292)$ & $(134,32)$ & $(0,0)$ & $(0,0)$ \\
  & $(168,112)$ & $(46,212)$ & $(75,110)$ & $(45,0)$ \\
\hline
\end{tabular}
\caption{Choices of Alice's and Bob's measurement operators and preparation states. Each operator (pure) is represented by the angle of its Bloch vector in $(\theta,\phi)$ format. In each setting, the operators in the first row correspond to $J_i,K_i=1$ (the first choices of the measurement operator and prepared state), the ones in the second row correspond to $J_i,K_i=2$ (the second choices of the measurement operator and prepared state). Note that since Bob's state preparation does not influence the result, we always set his two state as $(0,0)$ and $(45,0)$. }
\label{TE1}
\end{table}

To test the nonlocality across $J_{1}$ and $J_{2}$,
each of Alice and Bob makes two two-element POVMs.
Since the state preparation of Alice and Bob does not 
influence this test, we do not consider the information of preparation states.
The way we choose the 7 different measurement settings is as follow. We randomly generated many different measurement settings, and post-select 7 settings that lead to violation of CHSH inequality with maximally entangled states. 
We perform post-selection in step 2 to increase the probability of detecting nonlocality, because with randomly chosen measurement settings, the chance of a CHSH violation is only about 40\%.
In Table \ref{TE2},
we give the information of the measurement operators.
Each measurement operator is a pure operator represented by 
by the angle of its Bloch vector in $(\theta,\phi)$ format.

\begin{table}[H]
\centering
\small
\begin{tabular}{|c|c|c|}
\hline
\textbf{Setting} & $\mathbf{M_{1,j_1}}$ & $\mathbf{M_{2,j_2}}$ \\
\hline
1 & \{$(0,0)$, $(180,0)$\} & \{$(45,0)$, $(135,180)$\}  \\
  & \{$(90,0)$, $(90,180)$\} & \{$(135,0)$, $(45,180)$\}   \\
\hline
2 & \{$(36,103)$, $(144,283)$\} & \{$(15,340)$, $(165,160)$\} \\
  & \{$(49,46)$, $(131,226)$\} & \{$(79,211)$, $(101,31)$\} \\
\hline
3 & \{$(18,76)$, $(162,256)$\} & \{$(87,50)$, $(93,230)$\} \\
  & \{$(79,95)$, $(101,275)\}$ & \{$(48,291)$, $(132,111)$\} \\
\hline
4 & \{$(2,177)$, $(178,357)$\} & \{$(62,143)$, $(118,323)$\} \\
  & \{$(61,19)$, $(119,199)$\} & \{$(28,59)$, $(152,239)$\} \\
\hline
5 & \{$(12,212)$, $(168,32)$\} & \{$(1,271)$, $(179,91)$\} \\
  & \{$(18,252)$, $(162,72)$\} & \{$(68,262)$, $(112,82)$\} \\
\hline
6 & \{$(25,163)$, $(155,343)$\} & \{$(8,270)$, $(172,90)$\} \\
  & \{$(88,300)$, $(92,120)$\} & \{$(109,24)$, $(71,204)$\} \\
\hline
7 & \{$(12,292)$, $(168,112)$\} & \{$(134,32)$, $(46,212)$\} \\
  & \{$(91,330)$, $(89,150)$\} & \{$(33,58)$, $(147,238)$\} \\
\hline
\end{tabular}
\caption{Choices of Alice's and Bob's measurement operators 
to detect nonlocality.
Each operator (pure) is represented by the angle of its Bloch vector in $(\theta,\phi)$ format.}
\label{TE2}
\end{table}

Table \ref{TE3} gives the parameter of each optical device (the angles of half-wave plates and quarter-wave plates) to realize the strategies of Charlie's.

\begin{table}[H]
\centering
\small
\renewcommand{\arraystretch}{1.2}
\begin{tabular}{|c|c|c|c|c|c|c|c|c|c|}
\hline
\multirow{2}{*}{} & \multirow{2}{*}{\shortstack[c]{H1}} & \multirow{2}{*}{Q1} & \multirow{2}{*}{H2} & \multirow{2}{*}{Q2} & \multirow{2}{*}{H3} & \multirow{2}{*}{H4} & \multirow{2}{*}{H5} & \multirow{2}{*}{H6} & \multirow{2}{*}{H7} \\
&  & & & & & & & & \\
\hline
$S_{I}$ & 22.5 & 45 & 22.5 & 45 & 45 & 0 &0 &45 &0 \\
\hline
$S_{C}$[1] & 0 & 0 & 0 & 0 & 45 & 0 & 0& 45 & 0\\ 
 \hline
 $S_{C}$[2] & 22.5 & 45 & 22.5 & 45 & 45 & 0 &0 &45 &0 \\
 \hline
$S_{Q}$ & 22.5 & 45 & 0 & 0 & 0 & 0 &0 &45 &0\\
 \hline
$S_{N,1\to 2}$ & 0 & 0 & 22.5 & 45 & 0 & 0 &0 & 0 &0\\
 \hline
$S_{C,1\to 2}$[1] & 0 & 0 & 0 & 0 & 0 & 0 &0 & 0 &0\\
 \hline
 $S_{C,1\to 2}$[2] & 0 & 0 & 22.5 & 45 & 0 & 0 &0 & 0 &0\\
 \hline
$S_{Q,1\to 2}$ & 22.5 & 45 & 0 & 0 & 0 & 0 & 45 & 0 &0\\
\hline
\end{tabular}
\caption{Parameters to realize Charlie's different strategies on the optical platform. The Hi $(i=1,2,3,4,5,6,7)$ and Qj ($j=1,2$)  are the half-wave plates and quarter-wave plates in Fig.~2. }
\label{TE3}
\end{table}

\begin{widetext}

\section{Detailed Results of The $\chi^{2}$ Tests and Nonlocality Detection}\label{apdxG}
This part reports the $\chi^{2}$ values derived from the experimental statistics (in step 1) for the strategies $S_{I}$, $S_{C}$, $S_{Q}$, $S_{N,1\to 2}$, $S_{C,1\to 2}$, and $S_{Q,1\to 2}$, followed by the CHSH values obtained when verifying the memory type in step 2. 

In step 1, we need to check the Markovian condition for each strategy from lower levels to higher levels according to Fig.~2 in the main text. Note that condition $J_{1}-K_{1}-J_{2}-K_{2}$ is checked by $J_1-K_1-J_2$ and $(J_1-K_1)-J_2-K_2$,  condition $K_{1}-J_{1}-J_{2}-K_{2}$ is checked by $K_1-J_1-J_2$ and $(J_1-K_1)-J_2-K_2$,  condition $K_{1}-J_{1}-K_{2}-J_{2}$ is checked by $J_1-K_2-J_2$ and $K_1-J_1-(J_2-K_2)$. A strategy class is identified until it is accepted as a null-hypothesis, that is, the $\chi^{2}$ of its corresponding Markovian condition is smaller than the critical value $\chi^{2}_{crit}$ under significance level $\alpha=0.05$. The values of $\chi^{2}_{crit}$ are determined by the $\chi^{2}$-table. 
The $\chi^{2}$ values derived from the experimental statistics are listed in Table.~\ref{chi2value}.

\begin{table}[H]
\centering
\small
\renewcommand{\arraystretch}{1.2}
\begin{tabular}{|c|c|c|c|c|c|c|c|c|}
\hline
\multirow{2}{*}{} & \multirow{2}{*}{\shortstack[c]{Markovian\\conditions}} & \multirow{2}{*}{Set.1} & \multirow{2}{*}{Set.2} & \multirow{2}{*}{Set.3} & \multirow{2}{*}{Set.4} & \multirow{2}{*}{Set.5} & \multirow{2}{*}{Set.6} & \multirow{2}{*}{Set.7} \\
&  & & & & & & & \\
\hline
$S_{I}$ & \textcircled{1} & $0.00$ & $0.09$ & $0.03$ &$0.00$ &$0.00$ &$0.00$ & $0.00$\\
\hline
\multirow{3}{*}{$S_{C}$} & \textcircled{1} & 472.15 & 51.23 & 51.78 & 17.15 & 70.90 &44.83 & 53.82\\ \cline{2-9}
 & \textcircled{2} & 0.00 & 0.01 & 0.00 & 0.00 & 0.00 & 0.00 & 0.00\\ \cline{2-9}
 & \textcircled{5} & 0.00 & 0.00 & 0.00 & 0.00 & 0.00 & 0.00 & 0.00\\
 \hline
\multirow{3}{*}{$S_{Q}$} & \textcircled{1} & 3031.15 & 1536.00 & 42.67 & 938.71 & 3442.62 & 2996.07 & 1127.69 \\ \cline{2-9}
 & \textcircled{2} & 0.00 & 0.00 & 0.00 & 0.00 & 0.00 & 0.00 & 0.00\\ \cline{2-9}
 & \textcircled{5} & 0.00 & 0.00 & 0.00 & 0.00 & 0.00 & 0.00 & 0.00\\
 \hline
\multirow{3}{*}{$S_{N,1\to2}$} & \textcircled{1} & 225.49 & 3307.40 & 39.90 & 146.60 & 3497.90 & 223.16 & 371.42\\ \cline{2-9}
 & \textcircled{3} & 0.00 & 0.00 & 0.00 & 0.00 & 0.00 & 0.00 & 0.00\\ \cline{2-9}
 & \textcircled{5} & 0.00 & 0.00 & 0.00  &0.00  & 0.00 &0.00  &0.00 \\
 \hline
\multirow{3}{*}{$S_{C,1\to2}$} & \textcircled{2} & 55.46 & 700.06 & 22.62 & 176.12 & 1418.99 & 44.38 & 102.25\\ \cline{2-9}
 & \textcircled{3} & 121.23 & 633.89 & 48.42 & 188.03 & 1262.27 & 1164.19 & 278.38\\ \cline{2-9}
  & \textcircled{4} & 7.90 & 15.65 & 9.76 & 82.81 & 38.38 &  712.16 & 404.27\\ \cline{2-9}
 & \textcircled{5} & 0.00 & 0.00 & 0.00 & 0.00 & 0.00 & 0.00 & 0.00\\
 \hline
\multirow{3}{*}{$S_{Q,1\to2}$} 
  & \textcircled{2} & 55.68 & 360.85 & 8.06 & 26.47 & 905.55 &  24.04 & 23.52\\ \cline{2-9}
&  \textcircled{3} & 3196.84& 2770.93& 25.00 & 1588.82 & 4014.75 & 5777.58  & 2087.61\\ \cline{2-9}
 & \textcircled{4} & 1919.97 & 2223.12 & 283.98& 776.05 & 3921.35 &  4429.29 & 2664.50\\ \cline{2-9}
 & \textcircled{5} & 0.00 & 0.00 & 0.00 & 0.00 & 0.00 & 0.00 & 0.00\\
 \hline
\end{tabular}
\caption{The $\chi^{2}$ values for each strategy under 7 different settings of MP channel. The Markovian conditions are labeled as: \textcircled{1}   $(K_1, J_1) \perp (J_2, K_2)$, \textcircled{2} $K_1-J_1-J_2$, \textcircled{3}  $J_1-K_1-J_2$, \textcircled{4} $J_1-K_2-J_2$, \textcircled{5} $(J_1-K_1)-J_2-K_2$, \textcircled{6} $K_1-J_1-(J_2-K_2)$. The critical value $\chi^{2}_{crit}= 16.919$ for testing condition \textcircled{1}$, \chi^{2}_{crit}= 5.991$ for testing conditions \textcircled{2}-\textcircled{4}, and $\chi^{2}_{crit}= 12.592$ for testing conditions \textcircled{5}-\textcircled{6}.}
\label{chi2value}
\end{table}

Figure~\ref{CHSH} shows the CHSH values obtained during the Step~2 nonlocality 
tests for the two strategies labeled in Step 1 as $S_{Q}$ and $S_{Q,1\to2}$. 
Genuine quantum memory is successfully identified in all seven settings for $S_{Q}$, 
and in settings~1,~2, and~4--7 for $S_{Q,1\to2}$. An exception occurs for 
$S_{Q,1\to2}$ under setting~3, where its quantum memory is misclassified as 
classical, resulting in the incorrect assignment of the strategy as $S_{C,1\to2}$.

\begin{figure}[h]
    \centering
  \includegraphics[width=12cm]{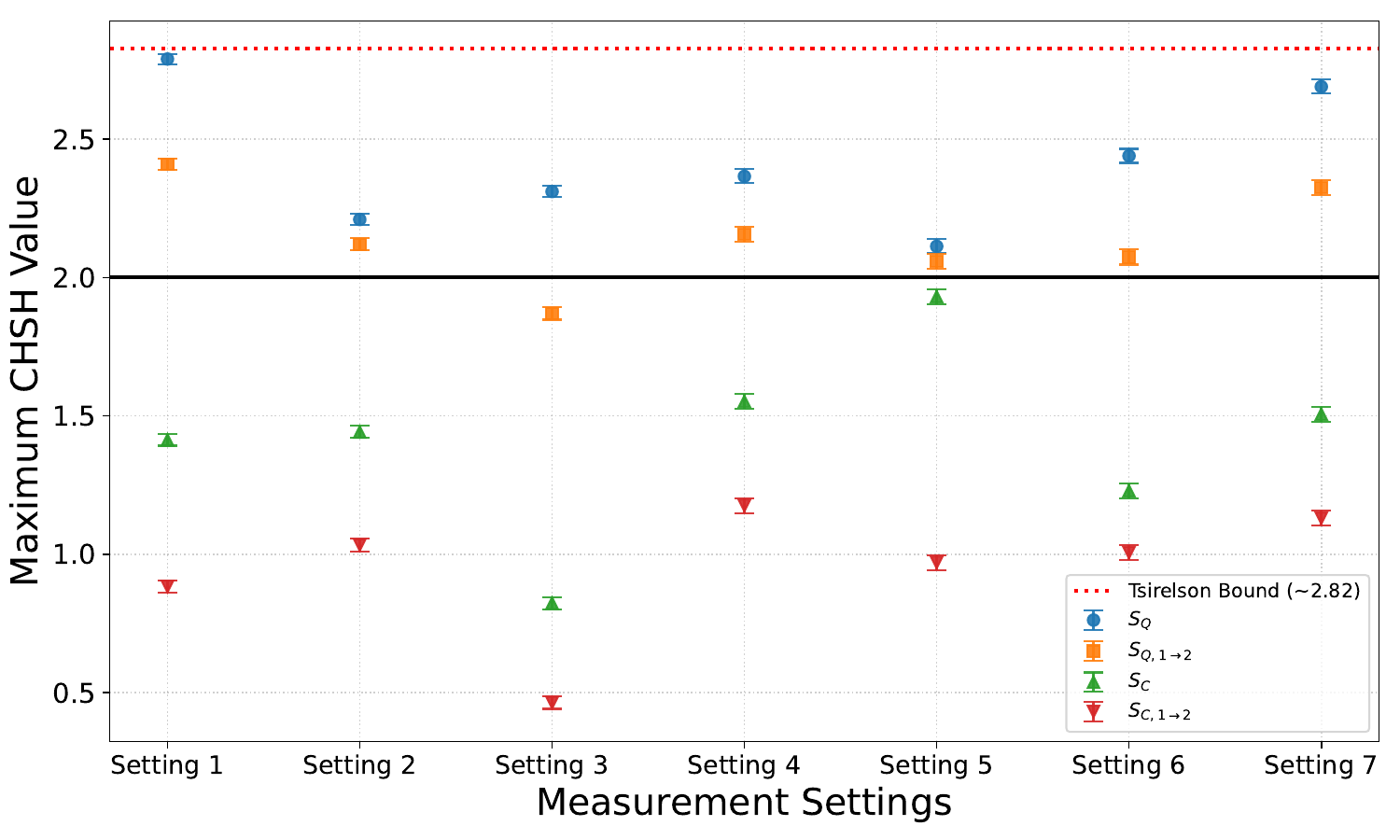}
  \caption{Maximal values among all CHSH inequalities in Eq.~(16) for each setting. A misidentification occurs only for the strategy $S_{Q,1\to 2}$ at setting 3. The error bars are derived from Poissonian counting statistics.
}
  \label{CHSH}
  \end{figure}

\section{Proof of Theorem 1}\label{apdxA}
To prove Theorem 1, we prepare the following lemma.
\begin{lemma}\label{LL2}
The distribution $P_{J_1,J_2,K_1,K_2}$
satisfies the Markovian chains 
$J_1- K_1- J_2-K_2 $ and $K_1- J_1- K_2-J_2$
if and only if it satisfies the chain
$(K_1,J_1) \perp (J_2,K_2)$.
Also, the distribution $P_{J_1,J_2,K_1,K_2}$
satisfies the Markovian chains 
$(J_1, K_1)- J_2-K_2 $
and $ K_1- J_1- (J_2,K_2)$
if and only if it satisfies the Markovian chain
$K_1-J_1-J_2-K_2 $.
\end{lemma}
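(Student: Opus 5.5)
The plan is to work directly with conditional-independence manipulations on $P:=P_{J_1,J_2,K_1,K_2}$, and to use the structure of the MP channels only where it is genuinely needed. The two parts behave differently. The second equivalence is pure probability. The first is false for arbitrary distributions: if $J_1=K_1=J_2=K_2$ is a uniform bit, both chains hold but independence fails. So for the first part I will use the fact that $K_i$ is generated from $J_i$ by the local kernel $P_{K_i|J_i}$.

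\emph{Second equivalence.} The direction from $K_1-J_1-J_2-K_2$ to the two chains is routine. That chain means $P=P_{J_1}P_{K_1|J_1}P_{J_2|J_1}P_{K_2|J_2}$. From this factorization one reads off $K_1\perp (J_2,K_2)\mid J_1$ by summing appropriately, and $K_2\perp (J_1,K_1)\mid J_2$ because $P_{K_2|J_2}$ is the last factor. These are exactly $K_1-J_1-(J_2,K_2)$ and $(J_1,K_1)-J_2-K_2$.

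For the converse, I will use $K_1-J_1-(J_2,K_2)$ to write $P=P_{J_1}P_{K_1|J_1}P_{J_2,K_2|J_1}$. Next I will expand $P_{J_2,K_2|J_1}=P_{J_2|J_1}P_{K_2|J_2,J_1}$. The chain $(J_1,K_1)-J_2-K_2$ gives $P_{K_2|J_2,J_1,K_1}=P_{K_2|J_2}$, and marginalizing over $K_1$ gives $P_{K_2|J_2,J_1}=P_{K_2|J_2}$. Substituting yields the four-variable chain factorization.

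\emph{First equivalence.} Independence $(K_1,J_1)\perp(J_2,K_2)$ trivially implies both chains. For the converse, I will extract two pieces of information about $P_{J_2|J_1,K_1}$:
\begin{itemize}
\item the chain $J_1-K_1-J_2-K_2$ gives $P_{J_2|J_1,K_1}=P_{J_2|K_1}$;
\item the chain $K_1-J_1-K_2-J_2$, which contains $K_1-J_1-(K_2,J_2)$, gives $P_{J_2|J_1,K_1}=P_{J_2|J_1}$.
\end{itemize}
Hence $P_{J_2|J_1=j_1}=P_{J_2|K_1=k_1}$ whenever $P_{J_1K_1}(j_1,k_1)>0$. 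This equality propagates along edges of the bipartite support graph of $P_{K_1|J_1}$ (with edge $j_1\sim k_1$ iff $P_{K_1|J_1}(k_1|j_1)>0$ and $P_{J_1}(j_1)>0$). If that graph is connected, $P_{J_2|J_1}$ is constant, so $J_2\perp(J_1,K_1)$.

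Finally, the chain $(J_1,K_1)-J_2-K_2$, which is implied by $J_1-K_1-J_2-K_2$, gives $P=P_{J_1,K_1}P_{J_2|J_1,K_1}P_{K_2|J_2}=P_{J_1,K_1}P_{J_2}P_{K_2|J_2}$. This is the desired independence $(K_1,J_1)\perp(J_2,K_2)$.

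\emph{Main obstacle.} The delicate step is the connectivity argument, i.e.\ the place where the local kernel enters. In the paper's implementation $P_{K_i|J_i}$ has full support (the $0.65/0.35$ rule), so the support graph is complete bipartite and the step is immediate. In general, the condition $P_{K_i|J_i=m}\neq P_{K_i|J_i=n}$ alone does not force connectivity. I would therefore state the lemma under a full-support (or connected-support) assumption on $P_{K_1|J_1}$. Failing that, I would try to close the gap using the no-signalling structure \eqref{G1}--\eqref{G2} of the process, together with the product form $P=P_{K_1|J_1}P_{K_2|J_2}\,Q(j_1,j_2|k_1,k_2)$ implied by \eqref{eq:GBR}. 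I expect that extra hypothesis to be necessary.
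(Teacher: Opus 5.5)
Your argument is essentially the paper's: for the first equivalence the paper also equates the two conditionals supplied by the two chains, deduces that they are constant, and lifts to full independence with the remaining chain. The only difference is that it works on Bob's side, $P_{J_1|J_2=j_2}=P_{J_1|K_2=k_2}$, where you work on Alice's side; for the second equivalence your direct substitution into the factorization is a cleaner version of the paper's detour through $K_1-(J_1,J_2)-K_2$. Your caveat is also correct and is precisely what the paper leaves unstated: its step $P_{J_1|J_2}(j_1|j_2)=\sum_{k_2}P_{K_2}(k_2)P_{J_1|K_2}(j_1|k_2)$ silently requires $P_{J_2K_2}(j_2,k_2)>0$ for all pairs, and no-signalling cannot supply this, since your counterexample is physically generated by the classical parallel strategy $\frac12\sum_{x}|x\rangle\langle x|\otimes|x\rangle\langle x|\otimes I_{(O,1),(O,2)}$ with computational-basis measurements and kernels $P_{K_i|J_i}(k|j)=\delta_{jk}$, which satisfy the paper's condition $P_{K_i|J_i=m}\neq P_{K_i|J_i=n}$.
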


\begin{proof}
For the first statement, the part of ``if'' is immediate.
We show the part of ``only if''.
Assume the Markovian chains 
$J_1- K_1- J_2-K_2 $ and $K_1- J_1- K_2-J_2$, which imply 
the relations
$P_{J_1|J_2,K_2}(j_1|j_2,k_2)=P_{J_1|J_2}(j_1|j_2)$
and
$P_{J_1|J_2,K_2}(j_1|j_2,k_2)=P_{J_1|K_2}(j_1|k_2)$, respectively.
Hence, we have $P_{J_1|J_2}(j_1|j_2)=\sum_{k_2}P_{K_2}(k_2)P_{J_1|K_2}(j_1|k_2)=P_{J_1}(j_1)$, and
$P_{J_1|K_2}(j_1|k_2)=P_{J_1}(j_1)$, which implies 
the Markovian chain $K_1-J_1 \perp J_2-K_2$.

For the second statement, the part of ``if'' is immediate.
We show the part of ``only if''.
Assume the Markovian chains
$(J_1, K_1)- J_2-K_2 $ and $ K_1- J_1- (J_2,K_2)$
The Markov chain implies 
the relation
$P_{J_1,K_1|J_2,K_2}(j_1,k_1|j_2,k_2)=P_{J_1,K_1|J_2}(j_1,k_1|j_2)$
which yields
the relation
$P_{K_1|J_2,K_2}(k_1|j_2,k_2)=P_{K_1|J_2}(k_1|j_2)$.
Thus,
$P_{K_1|J_1,J_2,K_2}(k_1|j_1,j_2,k_2)
=\frac{P_{J_1,K_1|J_2,K_2}(j_1,k_1|j_2,k_2)}
{P_{K_1|J_2,K_2}(k_1|j_2,k_2)}
=
\frac{P_{J_1,K_1|J_2}(j_1,k_1|j_2)}
{P_{K_1|J_2}(k_1|j_2)}
=P_{K_1|J_1,J_2}(k_1|j_1,j_2)$.
Hence, we have
$K_1-(J_1,J_2)-K_2$.
The Markovian chain $ K_1- J_1- (J_2,K_2)$
implies the Markovian chain $ K_1- J_1- J_2$.
The combination of 
$K_1-(J_1,J_2)-K_2$ and $ K_1- J_1- J_2$
implies 
$K_1-J_1-J_2-K_2 $.
\end{proof}

For the relation between constraints of process matrices
and the Markovian conditions for the distribution $P_{J_1,J_2,K_1,K_2}$,
we have the following equivalence conditions.

\begin{proofof}{Theorem 1}

\noindent\textbf{Step 1:}\quad
We show the equivalence between
Eq.~(9) and $(J_1, K_1)- J_2-K_2$.
The equivalence for the $2\to 1$ direction
can be shown in the same way.

Since the condition (9) implies
\begin{align}
&P_{K_2,J_2,J_1,K_1}(k_2,j_2,j_1,k_1)
=\Tr (W_{S} C[\Gamma_{1,(j_1,k_1)}]\otimes C[\Gamma_{2,(j_2,k_2)}])
=
P_{K_2|J_2}(k_2|j_2)
\Tr (W_{S} (C[\Gamma_{1,(j_1,k_1)}]\otimes 
 M_{2,j_2} \otimes \rho_{k_2}))
\notag\\
=&
P_{K_2|J_2}(k_2|j_2)
\Tr [(\Tr_{(O,2)}W_{S})\otimes \rho_{mix,(O,2)})
( C[\Gamma_{1,(j_1,k_1)}]\otimes 
 M_{2,j_2} \otimes \rho_{k_2})] \notag\\
=&
\frac{ P_{K_2|J_2}(k_2|j_2)}{d_{(O,2)}}
\Tr[ (\Tr_{(O,2)}W_{S})
 (C[\Gamma_{1,(j_1,k_1)}]\otimes 
 M_{2,j_2}) ],\label{NM69}
\end{align}
then we have
\begin{align}
P_{J_2,J_1,K_1}(j_2,j_1,k_1)&=
\frac{ 1}{d_{(O,2)}}
\Tr (\Tr_{(O,2)}W_{S})
 (C[\Gamma_{1,(j_1,k_1)}]\otimes M_{2,j_2}) ,\label{NM64} \\
P_{K_2|J_2,J_1,K_1}(k_2|j_2,j_1,k_1)
&=P_{K_2|J_2}(k_2|j_2).
\end{align}
Hence, the condition (9) implies the Markovian condition
$(J_1, K_1)- J_2-K_2$, i.e., the condition that
the random variable $K_2$ is decided only by the random variable $J_2$.

Next, we show that the Markovian condition $(J_1, K_1)- J_2-K_2$
implies the condition (9).
We assume that the Markovian condition $(J_1, K_1)- J_2-K_2$ holds.
The joint distribution $P_{K_2,J_2,J_1,K_1}$ is written as 
\begin{align}
P_{K_2,J_2,J_1,K_1}(k_2,j_2,j_1,k_1)
=P_{K_2|J_2}(k_2|j_2)
\Tr [W_{S} (C[\Gamma_{1,(j_1,k_1)}]\otimes 
 M_{2,j_2} \otimes \rho_{k_2})].
\end{align}
We have
\begin{align}
P_{J_2,J_1,K_1}(j_2,j_1,k_1)
=
\Tr [W_{S} (C[\Gamma_{1,(j_1,k_1)}]\otimes 
 M_{2,j_2} \otimes \sum_{k_2}P_{K_2|J_2}(k_2|j_2)\rho_{k_2})].
\end{align}
Thus, the Markovian condition $(J_1, K_1)- J_2-K_2$ implies
\begin{align}
P_{K_2|J_2}(k_2|j_2)
=P_{K_2|J_2,J_1,K_1}(k_2|j_2,j_1,k_1)
=\frac{P_{K_2|J_2}(k_2|j_2)
\Tr [W_{S} (C[\Gamma_{1,(j_1,k_1)}]\otimes 
 M_{2,j_2} \otimes \rho_{k_2})]}
 {\Tr [W_{S} (C[\Gamma_{1,(j_1,k_1)}]\otimes 
 M_{2,j_2} \otimes \sum_{k_2'}P_{K_2|J_2}(k_2'|j_2)\rho_{k_2'})]}
 \end{align}
Thus, we have
\begin{align}
\Tr [W_{S} (C[\Gamma_{1,(j_1,k_1)}]\otimes 
 M_{2,j_2} \otimes \sum_{k_2'}P_{K_2|J_2}(k_2'|j_2)\rho_{k_2'})]
 =
\Tr [W_{S} (C[\Gamma_{1,(j_1,k_1)}]\otimes 
 M_{2,j_2} \otimes \rho_{k_2})] \label{B87N}.
 \end{align}
That is, $\Tr[ W_{S} (C[\Gamma_{1,(j_1,k_1)}]\otimes 
 M_{2,j_2} \otimes \rho_{k_2})]$ does not depend on $k_2$.
Since $\rho_{k_1}$ is tomographically complete,
\begin{align}
\Tr [W_{S} (C[\Gamma_{1,(j_1,k_1)}]\otimes 
 M_{2,j_2} \otimes \rho_{k_2})] 
=
\Tr [W_{S} (C[\Gamma_{1,(j_1,k_1)}]\otimes 
 M_{2,j_2} \otimes \rho_{mix,(O,2)})]
=\frac{1}{d_{(O,2)}}
\Tr [(\Tr_{(O,2)} W_{S}) (C[\Gamma_{1,(j_1,k_1)}]\otimes 
 M_{2,j_2} )],
 \label{B87L}
 \end{align}
 which is equivalent to the condition (9).
 
Therefore, 
the condition (9) is equivalent to the Markovian condition
$(J_1, K_1)- J_2-K_2$.  

\noindent\textbf{Step 2:}\quad
We show the equivalence between
(7) and $J_1- K_1- J_2-K_2$.
The equivalence between
the $2\to 1$ direction 
can be shown in the same way.

Since the condition (7) implies
\begin{align}
&\Tr [W_{S} (C[\Gamma_{1,(j_1,k_1)}]\otimes C[\Gamma_{2,(j_2,k_2)}])]
=P_{K_1|J_1}(k_1|j_1)P_{K_2|J_2}(k_2|j_2)
 \Tr [W_{S} 
( M_{1,j_1} \otimes \rho_{k_1} 
\otimes M_{2,j_2} \otimes \rho_{k_2})]
\notag\\
=&
P_{K_1|J_1}(k_1|j_1)P_{K_2|J_2}(k_2|j_2)
\Tr [(
\rho_1
\otimes 
C[\Lambda_{C}] \otimes I_{(O,2)})
( M_{1,j_1} \otimes \rho_{k_1} 
\otimes M_{2,j_2} \otimes \rho_{k_2})]
 \notag\\
=&
P_{K_1|J_1}(k_1|j_1) P_{K_2|J_2}(k_2|j_2)
\Tr (\rho_1 M_{1,j_1})
\Tr (C[\Lambda_{C}]
(\rho_{k_1}\otimes M_{2,j_2}) ),
\end{align}
which shows Markov conditions $J_1- K_1- J_2$  and  $(J_1, K_1)- J_2-K_2$,
therefore condition (7) implies the Markovian condition
$J_1- K_1- J_2-K_2$.

Next, we show that the Markovian condition $J_1- K_1- J_2-K_2$
implies the condition (7).
We assume that the Markovian condition $J_1- K_1- J_2-K_2$ holds.

The joint distribution $P_{K_2,J_2,J_1,K_1}$ is written as 
\begin{align}
P_{K_2,J_2,J_1,K_1}(k_2,j_2,j_1,k_1)
=P_{K_1|J_1}(k_1|j_1) P_{K_2|J_2}(k_2|j_2)
\Tr [W_{S} 
( M_{1,j_1} \otimes \rho_{k_1} \otimes 
 M_{2,j_2} \otimes \rho_{k_2})].
\end{align}
Since the Markovian condition $J_1- K_1- J_2-K_2$ is a special case of 
the Markovian condition $(J_1, K_1)- J_2-K_2$,
the conclusion of Step 1 implies that the condition (9) holds.
Thus, we have
\begin{align}
&\Tr W_{S} 
( M_{1,j_1} \otimes \rho_{k_1} 
\otimes M_{2,j_2} \otimes \rho_{k_2})
\stackrel{(a)}{=}
\Tr [(\Tr_{(O,2)} W_{S} )
( M_{1,j_1} \otimes \rho_{k_1} 
\otimes M_{2,j_2})] 
(\Tr \rho_{mix,(O,2)}\rho_{k_2})
\notag \\
=&
\frac{1}{P_{K_1|J_1}(k_1|j_1) d_{(O,2)}}
\Tr [(\Tr_{(O,2)} W_{S} )
( M_{1,j_1} \otimes P_{K_1|J_1}(k_1|j_1) \rho_{k_1} 
\otimes M_{2,j_2})]
\stackrel{(b)}{=}
\frac{P_{J_1,K_1,J_2}(j_1,k_1,j_2)}{ P_{K_1|J_1}(k_1|j_1)} \\
=&
\frac{P_{J_1}(j_1) P_{J_1,K_1,J_2}(j_1,k_1,j_2)}{ P_{K_1,J_1}(k_1,j_1)}
=
P_{J_1}(j_1) P_{J_2|K_1,J_1}(j_2|k_1,j_1)
\stackrel{(c)}{=}
P_{J_1}(j_1) P_{J_2|K_1}(j_2|k_1)
,\label{BND4}
\end{align}
where $(a)$ follows from (9),
$(b)$ is derived by substituting 
$P_{K_1|J_1}(k_1|j_1)M_{1,j_1} \otimes \rho_{k_1}$ into $C[\Gamma_{1,(j_1,k_1)}]$
in \eqref{NM64},
$(c)$ follows from 
the Markovian condition $J_1- K_1- J_2-K_2$.
Thus, \eqref{BND4} implies
\begin{align}
&\Tr [(\Tr_{(I,1)(O,2)}W_{S})
(\rho_{k_1}\otimes M_{2,j_2})] 
=\Tr [(\Tr_{(O,2)}W_{S})
(I_{(I,1)} \otimes \rho_{k_1}\otimes M_{2,j_2})] \notag\\
=&\sum_{j_1}\Tr [(\Tr_{(O,2)}W_{S})
(M_{1,j_1}  \otimes \rho_{k_1}\otimes M_{2,j_2})] 
= \sum_{j_1}
P_{J_1}(j_1) P_{J_2|K_1}(j_2|k_1)
=P_{J_2|K_1}(j_2|k_1).\label{BND5}
\end{align}
The combination of \eqref{BND4} and \eqref{BND5} yields
\begin{align}
& \frac{ \Tr [W_{S} 
( M_{1,j_1} \otimes \rho_{k_1} 
\otimes M_{2,j_2} \otimes \rho_{k_2})]}
{\Tr [(\Tr_{(I,1)(O,2)}W_{S})
(\rho_{k_1}\otimes M_{2,j_2})] }
=
\frac{P_{J_1}(j_1) P_{J_2|K_1}(j_2|k_1)}{
 P_{J_2|K_1}(j_2|k_1)  }
=P_{J_1}(j_1).\label{XBNW}
\end{align}

Hence, 
\begin{align}
& \Tr [W_{S} 
( M_{1,j_1} \otimes \rho_{k_1} 
\otimes M_{2,j_2} \otimes \rho_{k_2})]
=P_{J_1}(j_1)
\Tr [(\Tr_{(I,1)(O,2)}W_{S})
(\rho_{k_1}\otimes M_{2,j_2})] \notag\\
=&P_{J_1}(j_1)
\Tr [((\Tr_{(I,1)(O,2)}W_{S})\otimes I_{(O,2)})
(\rho_{k_1}\otimes M_{2,j_2}\otimes \rho_{k_2})] .
\label{ZT1}
\end{align}
Since $M_{2,j_2}\otimes \rho_{k_2}$ is tomographically complete,
there exist coefficients $t(j_2,k_2)$ such that
$\sum_{j_2,k_2}t(j_2,k_2)
M_{2,j_2}\otimes \rho_{k_2}=I_{(I,2)(O,2)}$.
Thus, we have
\begin{align}
& \Tr (\Tr_{(I,2)(O,2)} W_{S} )
(M_{1,j_1} \otimes \rho_{k_1})
 =
 \Tr W_{S} 
( M_{1,j_1} \otimes \rho_{k_1} 
\otimes I_{(I,2)(O,2)}) \notag\\
 =&
 \Tr W_{S} 
( M_{1,j_1} \otimes \rho_{k_1} 
\otimes 
\sum_{j_2,k_2}t(j_2,k_2)
M_{2,j_2}\otimes \rho_{k_2})\notag\\
\stackrel{(a)}{=}&P_{J_1}(j_1)
\Tr ((\Tr_{(I,1)(O,2)}W_{S})\otimes I_{(O,2)})
(\rho_{k_1}\otimes 
\sum_{j_2,k_2}t(j_2,k_2)
M_{2,j_2}\otimes \rho_{k_2}) 
= P_{J_1}(j_1)
\Tr ((\Tr_{(I,1)(O,2)}W_{S})\otimes I_{(O,2)})
(\rho_{k_1}\otimes I_{(I,2)(O,2)}) \notag\\
=&P_{J_1}(j_1)
\Tr [ ((\Tr_{(I,1)(I,2)(O,2)}W_{S}) \rho_{k_1}] \Tr I_{(O,2)}
=P_{J_1}(j_1)
d_{(O,2)}
\Tr ((\Tr_{(I,1)(I,2)(O,2)}W_{S}) \rho_{k_1},
\label{ZT1J}
\end{align}
where $(a)$ follows from \eqref{ZT1}.
Since $\rho_{k_1}$ is tomographically complete,
there exist coefficients $t'(k_1)$ such that
$\sum_{k_1}t'(k_1) \rho_{k_1}=I_{(O,1)}$,
\begin{align}
&\Tr (\Tr_{(O,1)(I,2)(O,2)} W_{S} )
M_{1,j_1} 
=\Tr (\Tr_{(I,2)(O,2)} W_{S} )
(M_{1,j_1} \otimes I_{(O,1)})
=\Tr (\Tr_{(I,2)(O,2)} W_{S} )
(M_{1,j_1} \otimes \sum_{k_1}t'(k_1)  \rho_{k_1}) \notag\\
\stackrel{(a)}{=}& d_{(O,2)}P_{J_1}(j_1)
\Tr ((\Tr_{(I,1)(I,2)(O,2)}W_{S}) \sum_{k_1}t'(k_1) \rho_{k_1}
= d_{(O,2)}P_{J_1}(j_1)
\Tr ((\Tr_{(I,1)(I,2)(O,2)}W_{S}) I_{(O,1)} \notag\\
=&d_{(O,2)} P_{J_1}(j_1) \Tr W_{S}
= d_{(O,2)}^2 d_{(O,1)}
P_{J_1}(j_1),
\label{ZT15}
\end{align}
where $(a)$ follows from \eqref{ZT1J}.

The combination of 
\eqref{XBNW} and \eqref{ZT15} implies 
\begin{align}
\frac{ \Tr W_{S} 
( M_{1,j_1} \otimes \rho_{k_1} 
\otimes M_{2,j_2} \otimes \rho_{k_2})}
{\Tr (\Tr_{(I,1)(O,2)}W_{S})
(\rho_{k_1}\otimes M_{2,j_2}) }
=
d_{(O,1)}^{-1}d_{(O,2)}^{-2}\Tr (\Tr_{(O,1)(I,2)(O,2)}W_{S})M_{1,j_1}.\label{MA3}
\end{align}
Under the tomographically complete condition, Eq.~\eqref{MA3} implies that we can always find an input state $\rho_{1}$ and a channel $\Lambda_C$ such that $ W_{S}=\rho_{1}\otimes C[\Lambda_{C}] \otimes I_{(O,2)}$, which is exactly the condition (7).

Therefore, 
the condition (7) is equivalent to the Markovian condition
$J_1- K_1- J_2-K_2$.  

\noindent\textbf{Step 3:}\quad
We show the equivalence between
(6) and $K_1- J_1- J_2-K_2$.
The condition (6)
is equivalent to the combination of 
the condition (9) for $W_{Q,1\to 2}$ 
and the analogous condition for $W_{Q,2\to 1}$.
These two conditions
are equivalent to 
$(J_1, K_1)- J_2-K_2$ and $K_1- J_1- (J_2,K_2)$,
respectively.
Hence,
The condition (6)
is equivalent to the combination of 
$(J_1, K_1)- J_2-K_2$ and $K_1- J_1- (J_2,K_2)$,
which is equivalent to 
$K_1- J_1- J_2-K_2$, as shown in Lemma \ref{LL2}.

We show the equivalence between
(4) and $K_1- J_1 \perp J_2-K_2$.
The condition (4)
is equivalent to the combination of 
the condition (7) for $W_{N,1\to 2}$ 
and the analogous condition for $W_{N,2\to 1}$.
These two conditions
are equivalent to 
$J_1- K_1- J_2-K_2$ and $K_1- J_1- K_2-J_2$,
respectively.
Hence,
The condition (4)
is equivalent to the combination of 
$J_1- K_1- J_2-K_2$ and $K_1- J_1- K_2-J_2$,
which is equivalent to 
$K_1- J_1 \perp J_2-K_2$, as shown in Lemma \ref{LL2}.
\end{proofof}

Combining the above methods, we are able to identify all strategy classes in (14).

\section{Testing the Markov chain condition}\label{apdxB}
In this section we explain how to do the hypothesis testing on the Markov chain condition using $\chi^{2}$ test. 
Note that all Markov chain condition we need to test in our case can be reduced to a three variable case,
e.g., testing $J_1-K_1-J_2-K_2$ is equivalent to 
testing $J_1-K_1-J_2$ and $(J_1,K_1)-J_2-K_2$.
Therefore, we just consider the three variables case here. 
We let $X$, $Y$, $Z$ be three random variables with $d_X$, $d_Y$, $d_Z$ outcomes respectively. We consider testing the Markov chain condition $X-Y-Z$ w.l.o.g.
Then we can formulate the hypothesis as follows:
\begin{itemize}
    \item Null Hypothesis $H_{0}$: $X$, $Y$, $Z$ form a Markov chain $X-Y-Z$ ($X$ and $Z$ are conditionally independent given $Y$).
    \item Alternative Hypothesis $H_{1}$:  $X$, $Y$, $Z$ does not form a Markov chain ($X$ and $Z$ are not conditionally independent given $Y$).
\end{itemize}

Suppose we make $N$ repetitive experiments, obtaining $N$ value triplets $(X,Y,Z)$. Let $N_{ijk}$ be the number of counts that condition $(X=i,Y=j,Z=k)$ is satisfied. 
Using the fact that $X,Y,Z$ form a Markov chain $X-Y-Z$ if and only if $X$ and $Z$ are conditionally independent given Y, we construct the expected counts $E_{ijk}$ when $X,Y,Z$ form a Markov chain as 
\begin{equation}
  E_{ijk}=N_{j}\frac{N_{ij}}{N_{j}}\frac{N_{jk}}{N_{j}}=\frac{N_{ij}N_{jk}}{N_{j}},
\end{equation}
where $N_{ij}=\sum_{k}N_{ijk}$, $N_{jk}=\sum_{i}N_{ijk}$, $N_{j}=\sum_{ik}N_{ijk}$.

Then we can calculate the $\chi^{2}$ value as
\begin{equation}
\chi^{2}=\sum_{i=1}^{d_{X}}\sum_{j=1}^{d_{Y}}\sum_{k=1}^{d_{Z}}\frac{(N_{ijk}-E_{ijk})^2}{E_{ijk}}.
\end{equation}
And the degree of freedom of the test is 
\begin{equation}
df=(d_{X}-1)d_{Y}(d_{Z}-1).
\end{equation}

Then we can compare the obtained $\chi^{2}$ value to the critical value $\chi^{2}_{crit}$ with $df$ at the chosen significance level, accept hypothesis $H_{0}$ if $\chi^{2}\leq \chi^{2}_{crit}$ and reject otherwise.  

\section{Proof of theorem 2}\label{apdxC}
We prove a useful lemma before the proof of the theorem.

\begin{lemma}\label{distprod}
Let $\rho_{AB}$ be a density operator on $\mathcal{H}_A\otimes\mathcal{H}_B$ such that
$\rho_{AB}\neq \rho_A\otimes\rho_B$, where $\rho_A:=\Tr_B[\rho_{AB}]$ and
$\rho_B:=\Tr_A[\rho_{AB}]$.
Let $\mathcal{M}_A=\{A_0,A_1,\dots,A_{n-1}\}$ and
$\mathcal{M}_B=\{B_0,B_1,\dots,B_{m-1}\}$ be POVMs on $\mathcal{H}_A$ and $\mathcal{H}_B$,
respectively. Assume that the random choice of $(\mathcal{M}_A,\mathcal{M}_B)$ is
continuous in the following sense: there exist finite-dimensional local coordinates
for the POVM sets such that the joint law of the coordinates is absolutely continuous
with respect to Lebesgue measure, and the choices of $\mathcal{M}_A$ and
$\mathcal{M}_B$ are independent.

Define $p_{AB}(a,b):=\Tr[\rho_{AB}(A_a\otimes B_b)]$ and marginals
$p_A(a):=\sum_b p_{AB}(a,b)$, $p_B(b):=\sum_a p_{AB}(a,b)$.
Then, with probability one, the outcomes are not independent, i.e.,
there exists at least one pair $(a,b)$ such that $p_{AB}(a,b)\neq p_A(a)p_B(b)$.
Equivalently, the event $\{p_{AB}(a,b)=p_A(a)p_B(b)\ \forall a,b\}$ has probability $0$.
\end{lemma}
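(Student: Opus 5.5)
The plan is to write $\Delta:=\rho_{AB}-\rho_A\otimes\rho_B\neq 0$ and note that for every $a,b$,
\[
p_{AB}(a,b)-p_A(a)p_B(b)=\Tr[\Delta(A_a\otimes B_b)],
\]
since $p_A(a)=\Tr[\rho_A A_a]$ and $p_B(b)=\Tr[\rho_B B_b]$ by completeness of the POVMs. The bad event is therefore
\[
E=\{\Tr[\Delta(A_a\otimes B_b)]=0\ \forall a,b\}.
\]
Each defining function is a polynomial, in fact bilinear, in the local coordinates of $(\mathcal{M}_A,\mathcal{M}_B)$, assuming the coordinates are chosen so that the POVM elements depend polynomially or real-analytically on them; for the two-outcome case in (S2) they are affine. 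I will use the standard fact that the zero set of a real-analytic function on a connected open set is Lebesgue-null unless the function vanishes identically. By absolute continuity it then suffices to exhibit one nonvanishing function $f(a,b)=\Tr[\Delta(A_a\otimes B_b)]$ that is not identically zero on the coordinate domain.

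To use independence, fix $\mathcal{M}_B$ first. Define $\Delta_b:=\Tr_B[\Delta(I\otimes B_b)]$, a Hermitian operator on $\mathcal{H}_A$. Then $\Tr[\Delta(A_a\otimes B_b)]=\Tr[\Delta_b A_a]$. The argument has two conditional steps.

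\emph{Step 1.} With probability one over $\mathcal{M}_B$, some $\Delta_b\neq 0$. The map $B\mapsto \Tr_B[\Delta(I\otimes B)]$ is linear and nonzero on $\mathcal{B}(\mathcal{H}_B)$, because $\Delta\neq 0$ and product operators span. Its kernel is therefore a proper subspace. Each $B_b$ lying in this kernel is a nontrivial linear condition on the coordinates, so the event that all $B_b$ lie in it is null. One subtlety arises when, for example, $B_0=I-B_1$. Since $\Tr_B\Delta=\rho_A-\rho_A=0$, the identity lies in the kernel, so $B_0$ is in the kernel iff $B_1$ is. This is harmless, because it suffices that the single free element $B_1$ avoid the kernel. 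For that, the coordinate image must not lie inside a proper subspace, which holds because the distribution is absolutely continuous on a full-dimensional set.

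\emph{Step 2.} Conditional on such a $\mathcal{M}_B$, with $\Delta_b\neq 0$ for a fixed $b$, the set of $\mathcal{M}_A$ with $\Tr[\Delta_b A_a]=0$ for all $a$ is again contained in the zero set of a nonzero linear function of the coordinates, hence null. As before, $\Tr\Delta_b=\Tr[\Delta(I\otimes B_b)]=0$, so the constraint $\sum_a A_a=I$ imposes no restriction here.

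Fubini's theorem on the product law, which is valid by independence, then gives $\Pr(E)=0$. The main obstacle I anticipate is making the phrase ``not identically zero on the coordinate domain'' rigorous for general coordinate charts. I will handle it by requiring that the coordinate map be real-analytic with image having nonempty interior in the affine space of POVMs, so that a linear functional vanishing on the image vanishes on the whole affine span. Since $\Tr\Delta_b=0$ and $\Delta_b\neq 0$, such a functional is nonzero on the traceless directions, and hence nonzero on that span.
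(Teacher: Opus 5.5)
Your proof is correct, and it differs from the paper's mainly in how the measure-zero step is organized.

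\textbf{The paper's route.} The paper fixes one pair $(a,b)$ and expands $A_a$, $B_b$ in real bases of $\mathsf{Herm}(\mathcal{H}_A)$ and $\mathsf{Herm}(\mathcal{H}_B)$. It notes that $g(x,y)=\sum_{i,j}c_{ij}x_iy_j$ is a nonzero polynomial because $\Delta\neq0$, and then cites the fact that a nonzero polynomial has a Lebesgue-null zero set. It never really uses independence. What it does use is that $(A_a,B_b)$ has an absolutely continuous law on the full space $\mathsf{Herm}(\mathcal{H}_A)\times\mathsf{Herm}(\mathcal{H}_B)$. This is assumed tacitly; it holds in (S2), since $A_1=M_1$ and $A_0=I-M_1$.

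\textbf{Your route.} You condition on Bob's POVM and pass to the contraction $\Delta_b=\Tr_B[\Delta(I\otimes B_b)]$. Fubini on the product law then reduces each stage to a hyperplane condition: first $B_b\in\ker L$, then $\Tr[\Delta_bA_a]=0$. In effect you prove the needed bilinear case of the zero-set lemma by slicing rather than citing it.

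\textbf{What each buys.} The paper's version is shorter. Yours is more elementary and makes explicit two points the paper glosses over. First, the normalization $\sum_aA_a=I$ does not trivialize the constraint; this is what your observations $\Tr_B\Delta=0$ and $\Tr\Delta_b=0$, and the use of a single free element, establish. Second, you spell out what nondegeneracy of the coordinate chart is actually required. Both arguments also tacitly need $n,m\ge 2$.

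\textbf{A weaker sufficient condition.} Your closing condition, a real-analytic chart whose image has nonempty interior in the affine space of POVMs, is stronger than necessary. It suffices that the image of the chart not lie in the hyperplane $\{A:\Tr[\Delta_bA]=0\}$ (and, in Step 1, not in $\ker L$). Your tracelessness observation is exactly what makes this work for fixed-trace families. An example is the rank-one projective measurements actually used in the experiment, a case that the paper's coordinate argument, with its Lebesgue measure on full matrix coordinates, does not literally cover.
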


\begin{proof}
Set $\Delta:=\rho_{AB}-\rho_A\otimes\rho_B$. By assumption, $\Delta\neq 0$.

\noindent\textbf{Step 1 (Fixing one pair $(a,b)$ is sufficient).}
Let $\mathcal{E}$ be the event that $p_{AB}(a,b)=p_A(a)p_B(b)$ holds for all $(a,b)$.
For any fixed pair $(a^\star,b^\star)$, define
$\mathcal{E}_{a^\star b^\star}:=\{p_{AB}(a^\star,b^\star)=p_A(a^\star)p_B(b^\star)\}$.
Then $\mathcal{E}\subseteq \mathcal{E}_{a^\star b^\star}$, hence it suffices to show
$\Pr(\mathcal{E}_{a^\star b^\star})=0$ for one fixed pair. We fix such a pair and
write it as $(a,b)$.

\noindent\textbf{Step 2 (Independence implies a bilinear constraint).}
We have
$p_{AB}(a,b)=\Tr[\rho_{AB}(A_a\otimes B_b)]$ and
$p_A(a)p_B(b)=\Tr[(\rho_A\otimes\rho_B)(A_a\otimes B_b)]$.
Thus $p_{AB}(a,b)=p_A(a)p_B(b)$ is equivalent to
\begin{equation}\label{eq:bilinear}
\Tr\!\big[\Delta\,(A_a\otimes B_b)\big]=0.
\end{equation}

\noindent\textbf{Step 3 (The solution set is Lebesgue-null).}
Given a fixed labels $(a,b)$, we have randomized operators $A_a,B_b$; they are random
Hermitian matrices. 
Choose real bases $\{E_i\}_{i=1}^{d_A^2}$ of
$\mathsf{Herm}(\mathcal{H}_A)$ and $\{F_j\}_{j=1}^{d_B^2}$ of
$\mathsf{Herm}(\mathcal{H}_B)$, and write
$A_a=\sum_{i=1}^{d_A^2} x_i E_i,\qquad
B_b=\sum_{j=1}^{d_B^2} y_j F_j$ by using $x \in \mathbb{R}^{d_A^2}$
and $y \in \mathbb{R}^{d_B^2}$.
Then, by setting 
$c_{ij}:=\Tr\!\big[\Delta(E_i\otimes F_j)\big]$,
the left-hand side of~\eqref{eq:bilinear} becomes
a real polynomial (indeed bilinear) 
$g(x,y):=\Tr\!\big[\Delta(A_a\otimes B_b)\big]
       =\sum_{i,j} c_{ij}\,x_i y_j$ of $(x,y)\in 
\mathbb{R}^{d_A^2}\times\mathbb{R}^{d_B^2}$.
Because $\Delta\neq 0$ and $\{E_i\otimes F_j\}_{i,j}$ is a basis of
$\mathsf{Herm}(\mathcal{H}_A\otimes\mathcal{H}_B)$, 
at least one coefficient $c_{ij}$ is nonzero, and consequently $g$ is not the zero polynomial.

By a standard measure-theoretic fact, 
the set that a nonzero polynomial in
$\mathbb{R}^N$ takes the value zero 
has Lebesgue measure zero \cite{Mityagin2020ZeroSet}.
Therefore,
the set $\{(x,y): g(x,y)=0\}$
is a Lebesgue-null subset of $\mathbb{R}^{d_A^2+d_B^2}$.

\noindent\textbf{Step 4 (Lebesgue-null $\Rightarrow$ probability zero under continuous sampling).}
By the continuity assumption on the sampling (absolute continuity w.r.t.\ Lebesgue
measure in local coordinates), any Lebesgue-null set is hit with probability $0$.
Hence, for the fixed pair $(a,b)$, the probability that~\eqref{eq:bilinear} holds is $0$,
i.e., $\Pr(\mathcal{E}_{ab})=0$. Combining with Step~1 yields $\Pr(\mathcal{E})=0$.
\end{proof}

Now we prove the theorem.
\begin{proof}[Proof of Theorem~\ref{TH5}]
Throughout this proof, Charlie's strategy is represented by a fixed process matrix $W_S$
(independent of Alice's and Bob's random choices), as assumed in item (i) of the theorem.
Alice and Bob choose their local MP operations according to condition (S2), i.e., by
independent continuous random variables $T=(T_1,T_2,T_3,T_4)$.
We write the corresponding collection of local operators as
\[
\Theta=\big(\{M_{1,j_1}\},\{M_{2,j_2}\},\{\rho_{k_1}\},\{\rho_{k_2}\}\big).
\]

\noindent\textbf{Step 1.}
Fix a target class $S_X$.
We prove the contrapositive statement: if $S\notin S_X$, then the induced distribution of
$(J_1,K_1,J_2,K_2)$ violates the Markovian condition $Mar(S_X)$ with probability $1$
with respect to the randomness of $T$.
Equivalently, the set of random choices $\Theta$ for which $Mar(S_X)$ still holds is a
Lebesgue-null (measure-zero) subset of the parameter space of $\Theta$.

\noindent\textbf{Step 2.}
The observed statistics are given by the generalized Born rule for process matrices \cite{OCB}:
\[
P(j_1,j_2,k_1,k_2)
=\Tr\!\Big[W_S\big(C[\Gamma_{1,(j_1,k_1)}]\otimes C[\Gamma_{2,(j_2,k_2)}]\big)\Big],
\]
where $C[\cdot]$ denotes the Choi representation \cite{choi1975completely}.
Markovian conditions in Eq.~(15) of the main text are conditional-independence relations among
$(J_1,K_1,J_2,K_2)$.
Each such relation can be tested by checking whether certain \emph{induced bipartite operators}
(defined below) have a tensor-product (product-matrix) form across a specified bipartition.

Define the following ``slice'' operators obtained by partially contracting $W_S$
with a local element and tracing out a subsystem:
\begin{align}
W^S_{M_{2,j_2}}
&:=\Tr_{I_2}\!\Big[\,W_S\big(I_{I_1O_1}\otimes M_{2,j_2}\otimes I_{O_2}\big)\Big], \label{eq:slice1}\\
W^S_{(\rho_{k_1},I_{O_2})}
&:=\Tr_{O_1O_2}\!\Big[\,W_S\big(I_{I_1}\otimes \rho_{k_1}\otimes I_{I_2O_2}\big)\Big], \label{eq:slice2}\\
W^S_{(M_{1,j_1},I_{O_2})}
&:=\Tr_{I_1O_2}\!\Big[\,W_S\big(M_{1,j_1}\otimes I_{I_1O_1O_2}\big)\Big]. \label{eq:slice3}
\end{align}

\noindent\textbf{Step 3.}
Let $A$ and $B$ be two (finite-dimensional) Hilbert spaces.
An operator $X$ on $A\otimes B$ is said to be a \emph{product matrix across}
$\mathcal{B}(A)\otimes\mathcal{B}(B)$ if it can be written as
\[
X = X_A\otimes X_B
\]
for some Hermitian operators $X_A$ on $A$ and $X_B$ on $B$.
This is the operator-level analogue of statistical independence for joint distributions
generated by local measurements.
Lemma~\ref{distprod} states that if a bipartite density operator is not a
product, then random independent local POVMs produce a non-independent joint distribution
with probability $1$.
Because the random choice in (S2) is independent and continuous, Lemma~\ref{distprod}
applies whenever one of the slice operators in \eqref{eq:slice1}--\eqref{eq:slice3}
fails to be a product matrix across the relevant bipartition.

Concretely: if for some fixed conditioning index (e.g., a particular $j_2$ or $k_1$) the corresponding
slice operator is non-product across a specified bipartition, then the set of random local choices
$\Theta$ for which the induced outcomes would \emph{fake} the required independence is measure zero;
hence the Markovian condition fails with probability $1$.

\noindent\textbf{Step 4.}
We now list, for each strategy class $S_X$ in Eq.~(14) of the main text, a product-structure property of the slices
that is necessary for $S\in S_X$. If $S\notin S_X$, at least one such property is violated, and then
Step~4 implies that $Mar(S_X)$ fails almost surely.

\subparagraph{(i) The classes $S_{Q,1\to 2}$ and $S_{Q,2\to 1}$.}
For $S_{Q,1\to 2}$, the defining one-way causal/memory structure implies that for every $j_2$,
the slice $W^S_{M_{2,j_2}}$ must be a product matrix across
\[
\mathcal{B}(\mathcal{H}_{I_1}\otimes \mathcal{H}_{O_1})\ \otimes\ \mathcal{B}(\mathcal{H}_{O_2}).
\]
If $S\notin S_{Q,1\to 2}$, then there exists some $j_2$ for which this product property fails.
By Step~4 (Lemma~\ref{distprod}), the corresponding independence relation in $Mar(S_{Q,1\to 2})$
is violated with probability $1$. The case $S_{Q,2\to 1}$ is analogous.

\subparagraph{(ii) The classes $S_{N,1\to 2}$ and $S_{N,2\to 1}$.}
For $S_{N,1\to 2}$ (sequential without memory), the structure implies two product requirements:
for all $k_1$, the slice $W^S_{(\rho_{k_1},I_{O_2})}$ must be a product matrix across
\[
\mathcal{B}(\mathcal{H}_{I_1})\ \otimes\ \mathcal{B}(\mathcal{H}_{I_2}),
\]
and for all $j_2$, the slice $W^S_{M_{2,j_2}}$ must be a product matrix across
\[
\mathcal{B}(\mathcal{H}_{I_1}\otimes \mathcal{H}_{O_1})\ \otimes\ \mathcal{B}(\mathcal{H}_{O_2}).
\]
If $S\notin S_{N,1\to 2}$, then at least one of these properties fails (for some $k_1$ or $j_2$),
and Step~4 yields that $Mar(S_{N,1\to 2})$ is violated with probability $1$.
The direction $2\to 1$ is analogous.

\subparagraph{(iii) The class $S_Q$ (quantum parallel).}
For $S_Q$, the parallel structure implies that at least one of the following product conditions
must hold for all relevant local indices (depending on the slicing convention):
for all $j_1$, the slice $W^S_{(M_{1,j_1},I_{O_2})}$ is product across
\[
\mathcal{B}(\mathcal{H}_{O_1})\ \otimes\ \mathcal{B}(\mathcal{H}_{I_2}),
\]
and for all $j_2$, the slice $W^S_{M_{2,j_2}}$ is product across
\[
\mathcal{B}(\mathcal{H}_{I_1}\otimes \mathcal{H}_{O_1})\ \otimes\ \mathcal{B}(\mathcal{H}_{O_2}).
\]
If $S\notin S_Q$, then at least one of these product properties fails for some index,
and Step~4 implies that $Mar(S_Q)$ fails with probability $1$.

\subparagraph{(iv) The class $S_I$ (individual).}
Finally, $S_I$ requires that the process matrix itself factorizes across the two parties:
\[
W_S = W_{I_1O_1}\otimes W_{I_2O_2}
\quad\text{(up to an overall scalar),}
\]
i.e., $W_S$ is a product matrix across
$\mathcal{B}(\mathcal{H}_{I_1}\otimes \mathcal{H}_{O_1})\otimes
 \mathcal{B}(\mathcal{H}_{I_2}\otimes \mathcal{H}_{O_2})$.
If $S\notin S_I$, then $W_S$ is non-product across this bipartition, and Step~4 yields that
$Mar(S_I)$ fails with probability $1$.

\noindent\textbf{Step 6 (Conclusion).}
In all cases above, if $S\notin S_X$, then at least one relevant slice operator is non-product across the
partition that corresponds to $Mar(S_X)$.
By Lemma~\ref{distprod} and the independent continuous random choice of local operations in (S2),
the probability that the induced distribution satisfies $Mar(S_X)$ is $0$.
Equivalently, it violates $Mar(S_X)$ with probability $1$.
This completes the proof.
\end{proof}

\section{Experimental setup}\label{apdxF}

In this appendix we introduce our experimental setup,
the parameters to realize Charlie's strategies, and the settings of MP channels.

As depicted in Fig.~2, our experimental demonstration is implemented on a photonic platform where the quantum state is encoded onto single photons generated via the spontaneous parametric down-conversion (SPDC) process. A horizontally polarized 775 nm continuous-wave (CW) laser pumps a 20-mm-long type-II PPKTP crystal, generating degenerate photon pairs at 1550 nm ($\ket{H}_p \to \ket{H}_s\ket{V}_i$). Following the crystal, the photon pairs undergo a polarization rotation to the state $\ket{V}_s\ket{H}_i$ via a dual-wavelength half-wave plate (dHWP) set at $45^\circ$. They are subsequently separated by a dual-wavelength polarizing beam splitter (dPBS). The vertically polarized photon is detected by a single-photon detector (SPD) to serve as the heralding trigger. The conjugate horizontally polarized photon is purified by a PBS, coupled into a standard single-mode fiber (SMF), and serves as the heralded single-photon source for the subsequent task. The 40 nm bandwidth filters are inserted before the fiber couplers to remove residual pump light. 

We encode the qubit on the hybrid polarization and path degrees of freedom (DOFs). Efficient coupling between these DOFs is achieved using a Beam Displacer (BD), which acts as a polarization-dependent spatial displacer. Specifically, the BD transmits $\ket{H}$ photons directly into spatial mode $\ket{0}_p$, while inducing a 4-mm lateral displacement for $\ket{V}$ photons into spatial mode $\ket{1}_p$. This hybrid structure allows Charlie to flexibly switch between strategies. For the sequential strategy, we utilize only the polarization DOF within a single spatial mode ($\ket{1}_p$). Conversely, for the parallel strategy, hyper encoded input states are prepared with both DOFs. To achieve this, we employ a versatile state preparation stage. First, wave plates H1 and Q1 prepare an initial polarization superposition $\ket{\Phi_{\text{in}}} = (\cos\theta_1\ket{H} + e^{i\phi_1}\sin\theta_1\ket{V})\ket{0}_p$. A BD then maps this superposition to the path DOF. Crucially, a half-wave plate (H3) is inserted specifically into path-1 to implement a controlled-unitary operation $U(\theta_2)$. This transforms the state into $\ket{\Phi'_{\text{in}}} = \cos\theta_1\ket{H}\ket{0}_p + e^{i\phi_1}U(\theta_2)\ket{V}\ket{1}_p$. By adjusting H3, either separable or entangled states can be prepared. Finally, additional wave plates (H2 and Q2) set the global polarization. To realize the probability preparation $\rho_{1,\lambda}$ required for strategy $\mathcal{S}_{C,1\rightarrow 2}$, H2 and Q2 are mounted on Electric Rotation Mounts (ERMs), allowing for precise, automated control of the classical mixing parameter $\lambda$.  

In the classification process, the two players, Alice and Bob, operate under strict isolation from the environment. In the sequential strategy, operations are straightforwardly performed on the polarization DOF using a standard QWP-HWP-PBS measurement structure. Following measurement, a subsequent set of HWP and QWP mounted on ERMs is used to prepare the measurement-outcome-dependent state $\rho_{k_1}$. In the parallel strategy, the implementation requires independent handling of the two DOFs. Alice performs her operations on the polarization DOF using optical components with large apertures that span all spatial paths. This design ensures her operations are path-independent. Following Alice's stage, a bi-directional "DOF switch" is introduced. In its active configuration (used for parallel strategy), this structure maps the path information onto the polarization DOF, transforming the path-encoded state into a polarization-encoded state. In its inactive configuration, it acts as an identity operation. This switch enables Bob to analyze the original path state using the same polarization-based settings as Alice, thereby effectively realizing a non-interfering parallel measurement scheme. More details about the switch are provided in the Supplementary Materials. 

A specific requirement arises in the sequential strategy $\mathcal{S}_{Q,1\rightarrow 2}$, where the path DOF serves as an auxiliary qubit to implement a controlled-NOT (CNOT) operation on Alice's output state. This is realized by inserting a $45^\circ$-rotated HWP (labeled H4 or H5) into path-1, which flips the polarization conditional on the photon's presence in that path.To recover the target statistics and ensure that Alice and Bob cannot infer the underlying order strategy, we employ an identical path-independent detection scheme across all strategies. Experimentally, we map the two spatial paths to orthogonal polarization modes using a bit-flip operation (a $45^\circ$ HWP in path-1) followed by a final PBS. The photons are then directed to two separate detectors (SPD1 and SPD2). The final input-output statistics are obtained by summing the coincidence counts of SPD1 $\&$ Trigger with SPD2 $\&$ Trigger. This summation effectively traces out the path degree of freedom, ensuring a consistent and strategy-agnostic measurement interface.

Now we introduce the realization of Charlie's strategies. Firstly we consider the parallel ones.
For independent strategy $\mathcal{S}_{I}$, 
we let $\rho_{1}=\rho_{2}=\ket{+}\bra{+}$,
where $\ket{+}=\frac{1}{\sqrt{2}}(\ket{0}+\ket{1})$.
For $\mathcal{S}_{C}$ we let $\rho_{12}=\frac{1}{2}(\ket{0}\bra{0}\otimes\ket{0}\bra{0}+\ket{1}\bra{1}\otimes\ket{1}\bra{1})$.
For $\mathcal{S}_{Q}$ we let $\rho_{12}=\ket{\Phi}\bra{\Phi}$, where $\ket{\Phi}=\frac{1}{2}(\ket{00}+\ket{11}).$ 
Note that the channel $\Lambda_{C}$ on the output system does not influence the strategy so we just ignore it (or take it as an identity channel).

For the sequential strategies, we need to determine the input state $\rho_{1}$ and the intermediate channel $\Lambda_{C}$.
For strategy $\mathcal{S}_{N,1\rightarrow 2}$, we let $\rho_{1}=\ket{0}\bra{0}$, and $\Lambda_{C}$ be an identity channel.
For strategy $\mathcal{S}_{C,1\rightarrow 2}$, we introduce a common uniformly distributed binary random variable $\lambda=\{0,1\}$ between $\rho_{1}$ and $\Lambda_{C}$. Let 
\begin{equation}
\rho_{1,\lambda}=
\begin{cases}
       \ket{0}\bra{0}, \ \lambda=0\\
       \ket{1}\bra{1}, \ \lambda=1
\end{cases}, 
\end{equation}
and let $\Lambda_{C}$ be identity channel when $\lambda=0$ and be the bit flip channel when $\lambda=1$.
To realize the strategy $\mathcal{S}_{Q,1\rightarrow 2}$, 
we perform a quantum parallel strategy $S_{Q}$ with probability $p=0.75$.
And with probability $1-p=0.25$, we prepare a maximally entangled state $\rho_{1X}=\ket{\Phi}\bra{\Phi}$, where subscript $X$ represents an auxiliary system, 
and the channel $\Lambda_{C}$ in this case is chosen to be a controlled-NOT gate, for which the control is the auxiliary system $X$ and the target is the output system $\mathcal{H}_{O,1}$ of Alice.  
The specific parameters to realize the above
strategies on our optical platform
are in the appendix \ref{ApdxD}.

\end{widetext}

\end{document}